\DeclareFontFamily{U}{jkpmia}{}
\DeclareFontShape{U}{jkpmia}{m}{it}{<->s*jkpmia}{}
\DeclareFontShape{U}{jkpmia}{bx}{it}{<->s*jkpbmia}{}
\DeclareMathAlphabet{\mathfrak}{U}{jkpmia}{m}{it}
\SetMathAlphabet{\mathfrak}{bold}{U}{jkpmia}{bx}{it}
\tikzset{elabelcolor/.style={color=blue} 
    vertex/.style={circle,draw,minimum size=1.5em},
    edge/.style={->,> = latex'}
}  
\definecolor{VHcolor}{rgb}{0.7,0.3,0.9}
\definecolor{MRocolor}{rgb}{0.1,0,1}
\newtheorem{defi}{Definition}
\newtheorem{lemma}{Lemma}
\newtheorem{thm}{Theorem}
\newcommand{\N}{{\sf N}}
\newcommand{\D}{\sf{D}}
\newcommand{\X}{X}
\newcommand{\Y}{Y}
\newcommand{\Z}{Z}
\newcommand{\W}{W}
\newcommand{\nsp}{\llbracket\N\rrbracket} 
\newcommand{\mgs}{\mathcal{E}} 
\newcommand{\ds}{\mathcal{D}} 
\newcommand{\cds}{\mathcal{D}^\complement} 
\newcommand{\ac}{\Check{\mathcal{A}}}  
\newcommand{\bset}{\beta\text{-set}} 
\newcommand{\bsets}{\beta\text{-sets}}
\newcommand{\bs}[1]{\beta(#1)}
\newcommand{\ess}{\mathfrak{B}^{^{\!\vee}}}
\newcommand{\cess}{\mathfrak{B}^{^{\!\vee\!\!\!\vee}}}
\newcommand{\pos}{\mathfrak{B}^{^{\!+}}}
\newcommand{\parv}{\mathfrak{B}^{^{\!\oplus}}}
\newcommand{\van}{\mathfrak{B}^{^{\!0}}}
\newcommand{\ce}{\slashed{E}} 
\newcommand{\tcut}[1]{C^\circ_{#1}} 
\newcommand{\tc}{\gf{T}^\circ} 
\newcommand{\mmC}[1]{\text{\textit{\u{C}}}_{\!#1}} 
\newcommand{\ent}{{\sf S}}  
\newcommand{\mi}{{\sf I}}  
\newcommand{\pmi}{\mathcal{P}}   
\newcommand{\face}{\mathscr{F}}   
\newcommand{\Svec}{\vec{\ent}}  
\newcommand{\gf}[1]{{\mathrm{#1}}}
\newcommand{\hp}{\gf{H}_{\mathcal{P}}}  
\newcommand{\lhp}{\gf{L}_{\mathcal{P}}}  
\definecolor{PminusEcol}{rgb}{1,0.8,0.8}
\definecolor{EminusMcol}{rgb}{1,0.85,0.65}
\definecolor{Mcol}{rgb}{1,0.97,0.5}
\definecolor{Qcol}{rgb}{0.7,1,0.9}
\definecolor{Vcol}{rgb}{0.83,0.83,1}
\definecolor{EminusPcol}{rgb}{0.8,1,0.6}
\definecolor{Ccol}{rgb}{0.9,0.9,0.9}
\definecolor{bulkcol}{rgb}{0.8,0.8,0.8}
\title{Necessary and sufficient conditions for entropy vector realizability by holographic simple tree graph models}
\author[a]{Veronika E. Hubeny,}
\emailAdd{veronika@physics.ucdavis.edu}
\author[b]{Massimiliano Rota}
\emailAdd{max.rota@bristol.ac.uk}
\affiliation[a]{Center for Quantum Mathematics and Physics (QMAP)\\ 
Department of Physics \& Astronomy, University of California, Davis, CA 95616 USA}
\affiliation[b]{School of Mathematics, University of Bristol,
Woodland Road, Bristol, BS8 1UG UK}
\abstract{We prove that the ``chordality condition'', which was established in arXiv:2412.18018 as a necessary condition for an entropy vector to be realizable by a holographic simple tree graph model, is also sufficient. The proof is constructive, demonstrating that the algorithm introduced in arXiv:2512.18702 for constructing a simple tree graph model realization of a given entropy vector that satisfies this condition always succeeds. We emphasize that these results hold for an arbitrary number of parties, and, given that any entropy vector realizable by a holographic graph model can also be realized, at least approximately, by a stabilizer state, they highlight how techniques originally developed in holography can provide broad insights into entanglement and information theory more generally, and in particular, into the structure of the stabilizer and quantum entropy cones. Moreover, if the strong form of the conjecture from arXiv:2204.00075 holds, namely, if all holographic entropy vectors can be realized by (not necessarily simple) tree graph models, then the result of this work demonstrates that the essential data that encodes the structure of the holographic entropy cone for an arbitrary number of parties, is the set of ``chordal'' extreme rays of the subadditivity cone.
}
\begin{document}
 

\maketitle

\section{Introduction}

For a fixed number of parties $\N$, and a collection of $\D=2^{\N}-1$ non-negative real numbers, is there a Hilbert space, and a density matrix, such that these numbers are the von Neumann entropies of the reduced density matrices for all possible (non-empty) collections of parties? This question is of paramount importance in information theory, both classical and quantum, where entropic constraints play a key role in establishing fundamental bounds on information processing. 

For $\N=2,3$ the answer to this question was found in \cite{641556,1193790}, which introduced the notions of classical and quantum \textit{entropy cones}. A collection of non-negative real numbers is the collection of entropies of a classical probability distribution, or a density matrix, if and only if it satisfies all basic entropy inequalities: \textit{subadditivity} (SA), \textit{strong subadditivity} (SSA), and in the classical case, \textit{monotonicity}.\footnote{\,In the quantum case there are additional non-redundant inequalities that specify the entropy cone, namely the Araki-Lieb inequality, and weak monotonicity. But since these can be obtained (respectively) from SA and SSA by introducing purifications, they can be interpreted as particular instances of these fundamental inequalities.}$^,$\footnote{\,More precisely, it might be possible to realize a given collection of entropies that obeys all these inequalities only approximately. This subtlety however is immaterial for the purposes of this work, which is mainly concerned with the construction of holographic graph models, rather than explicit quantum states. \label{ft:approx}} For $\N\geq 4$ on the other hand, the detailed structure of the classical and quantum entropy cones is not known, and an answer to the general question presented above is an open problem.\footnote{\,In the classical case, already at $\N=4$ it is known that there exist infinitely many non-redundant linear inequalities \cite{2007:Matus}, and the classical entropy cone is therefore not polyhedral.}

The same question can also be posed for restricted classes of states. For instance, in the case of stabilizer states, the structure of the entropy cone is well-understood for $\N = 4$ \cite{Linden:2013kal}. Apart from intrinsic interest, particularly given the wide-ranging applications of stabilizer states in quantum information theory, knowledge of the stabilizer entropy cone also provides an interesting inner bound to the full quantum entropy cone. 

Another important class of quantum states, which will be the focus of this work, is that of ``geometric states'' in the context of gauge-gravity duality \cite{Maldacena:1997re,Witten:1998qj,Gubser:1998bc}. These are the states of holographic CFTs that correspond to classical bulk geometries, for which the Ryu-Takayanagi \cite{Ryu:2006bv} formula computes the von Neumann entropy of boundary spatial subsystems.\footnote{\,In this work, we focus on static spacetimes.} The entropy cone associated with these states was introduced in \cite{Bao:2015bfa} and is referred to as the \textit{holographic entropy cone} (HEC). 
It was already shown in \cite{Bao:2015bfa} that the HEC can equivalently (and conveniently) be defined as the set of entropy vectors realized by certain ``graph models'', and in \cite{hayden2016holographic} that it is contained within the $\N$-party stabilizer entropy cone.

The study of the HEC is of course also important in the context of quantum gravity, where it remains a longstanding question how a bulk classical geometry is encoded in boundary data. While the HEC is completely known up to $\N=5$ \cite{Bao:2015bfa,Cuenca:2019uzx}, and several results have more recently been obtained for larger values of $\N$ \cite{Hernandez-Cuenca:2023iqh,Czech:2024rco}, a deep understanding of its general structure and the interpretation of the inequalities remains elusive.\footnote{\,See \cite{Czech:2025jnw} for first steps in this direction.} A conjecture about its structure in relation to the outer bound specified by the ``subadditivity cone'' was proposed in \cite{Hernandez-Cuenca:2022pst}, but even testing this conjecture is a challenging task.

For these reasons, new techniques for building graph models of given entropy vectors and eventually detecting unrealizability independently of knowledge of the inequalities were recently introduced in \cite{graph-construction}. In particular, given an arbitrary entropy vector that satisfies SA and SSA, it was shown in \cite{Hubeny:2024fjn} that additionally a certain ``chordality condition'' is necessary for the existence of a realization by a ``simple tree'' holographic graph model. For entropy vectors that obey these conditions, \cite{graph-construction} recently introduced an algorithm for the efficient construction of a candidate graph model. The goal of this work is to prove that these conditions are also sufficient, and that the algorithm introduced in \cite{graph-construction} always succeeds. Specifically, we will prove the following theorem.

\begin{thm}
\label{thm:main}
    An entropy vector that satisfies \emph{SA} and \emph{SSA} can be realized by a holographic graph model which is a simple forest if and only if the line graph of its correlation hypergraph is a chordal graph. 
\end{thm}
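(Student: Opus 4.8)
The plan is to concentrate entirely on the sufficiency direction, since the necessity of the chordality condition was already established in \cite{Hubeny:2024fjn}; by \Cref{thm:main} it then remains to show that whenever the line graph $L$ of the correlation hypergraph is chordal, the construction algorithm of \cite{graph-construction} terminates with a valid \emph{simple forest} whose min-cut function reproduces the given entropy vector. The natural engine for this is the classical structural characterization of chordal graphs: $L$ is chordal if and only if it admits a perfect elimination ordering, equivalently if and only if it admits a clique tree (junction tree) whose nodes are the maximal cliques of $L$ and which obeys the running-intersection property. My first step would therefore be to match the data processed at each stage of the algorithm to such an ordering, so that every stage corresponds to eliminating a simplicial vertex of $L$, i.e.\ a hyperedge of the correlation hypergraph whose neighborhood already forms a clique.

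The core of the argument is then an induction along this elimination ordering. At each step I would maintain the invariant that the partially built graph is a simple forest and that its min-cuts correctly compute the entropies of all subsystems already resolved by the hyperedges processed so far. The simplicial property of the eliminated vertex is precisely what guarantees that the local modification --- introducing a new bulk vertex and assigning weights to the incident edges --- is both well defined and compatible with the previously built subgraph: because the neighbors being glued together already form a clique, the running-intersection property forces them to lie in a single subtree of the emerging clique tree, so no cycle can be created and the forest structure (and its ``simple'' character) is preserved. The candidate weights themselves are read off from the entropy vector by inclusion--exclusion over the relevant subsystems.

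The main obstacle I anticipate is proving \emph{non-negativity} of every edge weight produced throughout the run, since this is the one global constraint whose failure could abort the algorithm. This is where the full strength of the hypotheses must enter: \emph{SA}, \emph{SSA}, and chordality together have to force each candidate weight --- an alternating sum of entropies attached to a clique of $L$ --- to be non-negative. I expect the cleanest route is to show that on a chordal $L$ these alternating sums collapse, via the running-intersection structure, to differences that \emph{SSA} already controls, so that chordality is exactly the ingredient that upgrades the pairwise and triple inequalities into positivity for every clique rather than only for small ones. Getting this bookkeeping to work uniformly across all maximal cliques, for arbitrary $\N$, is the hard part of the proof.

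A secondary, but still genuine, obstacle is the \emph{global} verification: even granting non-negative weights and locally correct cuts, one must confirm that the min-cut for an \emph{arbitrary} subsystem --- not merely those aligned with a single hyperedge --- is realized by the expected cut in the forest. Here I would exploit the fact that in a forest every min-cut decomposes along the clique tree, reducing the global statement to the local correctness already established in the induction, with \emph{SSA} invoked once more to rule out any ``cheaper'' competing cut. Assembling these pieces --- elimination ordering, inductive local construction, weight positivity, and tree-decomposed global min-cuts --- would then show that the algorithm of \cite{graph-construction} always succeeds, completing the sufficiency direction and hence the theorem.
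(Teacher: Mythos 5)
Your proposal has the right general skeleton (chordality gives a clique tree of $\lhp$, a simple tree is built around that clique tree, and one then verifies min-cuts), but it contains a genuine misstep that derails it. In a simple tree graph model every edge $e$ induces a bipartition $\langle\Y,\Y^\complement\rangle_e$ of the boundary parties, so for the model to have any chance of reproducing $\Svec$ the weight of $e$ must simply be the entropy component $\ent_\Y$ of the purifier-free side; this is exactly what Algorithm~\ref{alg:reconstruction} prescribes. There is no inclusion--exclusion and there are no alternating sums: the weights are single entropy components, and their non-negativity is immediate. The ``main obstacle'' you identify---proving non-negativity of alternating sums via SA, SSA and chordality---is therefore an artifact of a wrong weight rule, and since you leave it unresolved, your argument is incomplete even on its own terms. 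The actual difficulties lie elsewhere: (a) showing that the cost of the natural cut $\tcut{\X}$ for \emph{every} subsystem $\X$ (not just those aligned with hyperedges) equals $\ent_\X$, which requires the structure theory of KC-PMIs ($\Gamma$-partitions, connectivity of the induced subtrees for positive $\bsets$, and the separation/disconnection lemmas, cf.\ \Cref{lem:positive-bs-connected,lem:disconneted-cuts,thm:C-equals-S}); and (b) showing that these cuts are in fact min-cuts. For (b) your appeal to SSA ``to rule out any cheaper competing cut'' points in the wrong direction: the paper proves that the cost of an arbitrary competing cut exceeds $\ent_\X$ by exhibiting an explicit positive linear combination of \emph{SA instances alone} (via an iterated graph surgery collapsing rosettes and buds), and indeed this is what allows the paper to conclude, as a corollary, that chordal KC-PMIs in the subadditivity cone automatically satisfy SSA---so SSA cannot be an essential ingredient of this step.

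A second gap: the theorem is stated for simple \emph{forests}, and your induction never engages with when and why the realization is disconnected. The paper handles this by first reducing to \emph{irreducible} entropy vectors (no vanishing entries, and $Q^\cap=\varnothing$ for all max-cliques of $\lhp$), invoking the reduction theorems of the cited prior work, and then proving the tree statement (\Cref{thm:main-reduced}) for the irreducible case; the forest arises from assembling the trees of the reduced components. Your elimination-ordering induction might conceivably be made to work as an alternative to the paper's one-shot construction from a maximum spanning tree of the clique graph, but as written the invariant ``min-cuts correctly compute the entropies of all subsystems already resolved'' is precisely the global statement that needs proof, and the simplicial property of the eliminated vertex does not by itself supply it.
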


To make this work self-contained, in \S\ref{sec:preliminaries} we will briefly review all necessary definitions from earlier work: the correlation hypergraph from \cite{Hubeny:2024fjn} in \S\ref{subsec:basics}, the fundamentals of holographic graph models in \S\ref{subsec:graphs}, and the algorithm from \cite{graph-construction} in \S\ref{subsec:algorithm}. Note that \Cref{thm:main} allows for the possibility that the graph realizing a given entropy vector is a forest rather than a tree. This is related to the fact that, under certain relatively special conditions, the problem of finding a graph model for a given entropy vector can be reduced to the analogous problem for entropy vectors associated to fewer party systems (in this case, the trees that make up the forest). This reduction, which was discussed extensively in \cite{graph-construction}, will be briefly reviewed in \S\ref{subsec:algorithm}, which accordingly introduces a reduced version of \Cref{thm:main} tailored to that setting. The proof of this reduced version will be the goal of \S\ref{sec:proof} and will complete the argument for \Cref{thm:main}. We conclude this work in \S\ref{sec:discussion} with a few comments on the implications of this result and open questions for future investigations.

\section{Preliminaries}
\label{sec:preliminaries}

In this section we briefly review the definitions which are essential to make this work self-contained. For more details about the concepts reviewed in this subsection, we refer the reader to the original literature.

\subsection{KC-PMIs and the correlation hypergraph}
\label{subsec:basics}

For a fixed number of parties $\N$, we denote a party by an integer $\ell\in[\N]$, where $[\N]=\{1,2,\ldots,\N\}$. The ancillary party that purifies an $\N$-party density matrix will conventionally be denoted by 0, and the set of $\N+1$ parties, including the purifier, by $\nsp=\{0,1,\ldots,\N\}$. A collection of $\D=2^\N-1$ real numbers, corresponding to the non-empty subsets of $[\N]$, with a conventional order that for the examples presented below we choose to be lexicographic, is called an \textit{entropy vector}, and the vector space $\mathbb{R}^{\D}$ is called \textit{entropy space}. For convenience, the instances of the subadditivity inequality (SA) will be written in terms of the mutual information as $\mi(\X:\Y)\geq 0$, where $\X,\Y$ are subsets of $\nsp$. We will always implicitly assume that if a term in this inequality includes the purifier, the term is replaced by the entropy of its complement. The set of all SA instances therefore, by definition, also includes all instances of the Araki-Lieb inequality. The $\N$-party \textit{subadditivity cone} (SAC$_\N$) is the polyhedral cone specified by all instances of subadditivity.

Given a face $\face$ of the SAC$_\N$, and an arbitrary entropy vector $\Svec$ in the interior of $\face$, the set of MI instances that vanish for $\Svec$ is called the \textit{pattern of marginal independence} (PMI) of $\Svec$, or equivalently, the PMI of $\face$ (since it does not depend on the specific choice of $\Svec \in \text{int}\face$). A PMI is said to be SSA-compatible if there exists at least one entropy vector in the interior of the corresponding face that satisfies all instances of SSA, $\mi(\X:\Y\Z)\geq \mi(\X:\Y)$. It is easy to see that any SSA-compatible PMI $\pmi$ satisfies the following \textit{Klein's condition}
\begin{equation}
    \mi(\X:\Y)\in\pmi \implies \mi(\X':\Y')\in\pmi,
\end{equation}
for every $\X'\subseteq\X$ and $\Y'\subseteq\Y$, or $\X'\subseteq\Y$ and $\Y'\subseteq\X$. The converse (regarding SSA) is in general not true, and a PMI that satisfies this weaker condition is called a KC-PMI.

For fixed $\N$, we denote by $\mgs_\N$ the set of all MI instances, and we define the MI-\textit{poset} as the partially ordered set $(\mgs_\N,\preceq)$, where the partial order is
\begin{equation}
    \mi(\X':\Y')\preceq \mi(\X:\Y)\iff \text{$\X'\subseteq\X$ and $\Y'\subseteq\Y$, or $\X'\subseteq\Y$ and $\Y'\subseteq\X$}.
\end{equation}
A PMI $\pmi$ is then a KC-PMI if and only if it is a down-set in the MI-poset. Not all down-sets, however, are PMIs, due to the linear dependence among the MI instances, and SA.

For an arbitrary subsystem $\X\subseteq\nsp$ with $|\X|\geq 2$, the \textit{$\bset$ of} $\X$, denoted by $\bs{\X}$, is the set of MI instances whose arguments form a bipartition of $\X$, i.e.,
\begin{equation}
    \bs{\X}\coloneq \{\mi(\Y:\Z)\in\mgs_\N|\; \Y\cup\Z=\X \}
\end{equation}
The set of all $\bsets$ forms a partition of $\mgs_\N$.

Given an arbitrary down-set $\ds$ of the MI-poset, we denote by $\cds$ its complement in $\mgs_\N$. A $\bset$ is then said to be \textit{positive} if it is a subset of $\cds$, \textit{vanishing} if it is a subset of $\ds$, and \textit{partial} otherwise. Furthermore, it is said to be \textit{essential} if it contains at least one element of the antichain $\ac$ whose up-set is $\cds$, and \textit{completely essential} if it is a subset of $\ac$. For a given $\ds$, the set of positive $\bsets$ is denoted by $\pos$, the set of partial $\bsets$ by $\parv$, the set of vanishing $\bsets$ by $\van$, the set of essential $\bsets$ by $\ess$, and the set of completely essential $\bsets$ by $\cess$. 

While an arbitrary down-set can have a $\bset$ that is simultaneously partial and essential, it was shown in \cite{Hubeny:2024fjn} that for any $\N$ and any KC-PMI $\pmi$, \textit{every essential $\bset$ is positive}, implying that $\pmi$ is uniquely determined by the set of its positive $\bsets$. In particular, this allows for a suggestive representation of $\pmi$ in terms of a hypergraph $\hp=(V,E)$, called the \textit{correlation hypergraph} of $\pmi$, whose vertices correspond to the $\N+1$ parties, and hyperedges to subsystems whose $\bset$ is positive, i.e., 
\begin{equation}
 V = \{v_\ell|\, \ell\in \nsp\} \quad E= \{h_\X\subseteq V|\, \bs{\X}\; \text{is positive}\},
\end{equation}
where in $h_\X$, $\X=\{\ell\in\nsp|\, v_\ell\in h_\X\}$.

For an arbitrary subsystem $\X\subseteq \nsp$, with $|\X|\geq 2$, whose $\bset$ is either partial, vanishing, or essential,
the $\Gamma$-\textit{partition} of $\X$ is the partition\footnote{\,To simplify the notation, when considering a set $\mathcal{A}$ with a single element $a$ we simply write the element, i.e., we write $a$ instead of $\{a\}$.}
\begin{equation}
\label{eq:gamma-partition}
    \Gamma(\X)=\{\X_1,\ldots,\X_n,\ell_{1},\ldots\ell_{\tilde{n}}\}
\end{equation}
where each $\X_i$, with $i\in[n]$, is a maximal (with respect to inclusion) proper subset of $\X$ whose $\bset$ is positive. The fact that such collection of subsystems $\X_i$ is indeed a partition of a subset of $\X$ is not trivial, and was shown in \cite{Hubeny:2024fjn}. In \eqref{eq:gamma-partition}, the parties $\ell_1,\ldots,\ell_{\tilde{n}}$ are simply added, if necessary, to complete the partition. Given a KC-PMI $\pmi$, and subsystem $\X$ whose $\bset$ is partial or vanishing, it is possible to immediately determine from $\Gamma(\X)$ which elements of $\bs{\X}$ belong to $\pmi$: They are the MI instances $\mi(\Y:\Z)$ such that the bipartition $\{\Y,\Z\}$ of $\X$ is a coarser\footnote{\,Not necessarily strictly, i.e., it could be that $\{\Y,\Z\}=\Gamma(\X)$.} partition of $\X$ compared to $\Gamma(\X)$ (in the lattice of partitions of $\X$). For a subsystem $\X$ whose $\bset$ is essential, the same criterion determines which MI instances in $\bs{\X}$ belong to the antichain $\ac$.

\subsection{Holographic graph models}
\label{subsec:graphs}

An $\N$-party holographic graph model \cite{Bao:2015bfa} is a weighted graph $\gf{G}=(V,E)$, with non-negative edge weights, and a specification of a subset $\partial V\subseteq V$ of vertices called \textit{boundary vertices} which are labeled by the parties in $\nsp$. Each boundary vertex is labeled by exactly one party, and each party $\ell\in\nsp$ labels at least one boundary vertex. Vertices in $V\setminus \partial V$ are referred to as \textit{bulk vertices.} A holographic graph model is said to be \textit{simple} if each boundary vertex is labeled by a distinct party. A holographic \textit{simple tree} graph model is a simple holographic graph model with tree topology, i.e., a connected graph without cycles. Similarly, a \textit{simple forest} (cf., \Cref{thm:main}) is a simple holographic graph model with the topology of a forest, i.e., a graph without cycles which is not necessarily connected.

An arbitrary subset $C\subseteq V$ is called a \textit{cut}. For a subsystem $\X\subseteq\nsp$, an \textit{$\X$-cut}, denoted by $C_\X$, is a cut such that the set of boundary vertices in $C_\X$ is the set of \textit{all} boundary vertices labeled by the parties in $\X$. Given a cut, the set of \textit{cut edges}, denoted by $\ce(C)$, is the set of edges with one endpoint in the cut, and one in the complement $C_\X^\complement=V\setminus V_\X$, i.e.,
\begin{equation}
    \ce(C) \coloneqq \{\{v,v'\}\in E,\; v\in C,\; v'\in C^\complement\}.
\end{equation}
The \textit{cost} of an $\X$-cut is then defined as the sum of the weights of its cut edges, 
    \begin{equation}
        \norm{C} = \sum_{e\in\ce(C)} w(e),
    \end{equation}
where $w(e)$ is the weight of the edge $e$. 

For a subsystem $\X$, a \textit{min-cut for $\X$} is an $\X$-cut with minimum cost. Min-cuts are in general non-unique, but if for a subsystem $\X$ there are multiple min-cuts, the minimal one, with respect to inclusion, is unique and it is called the \textit{minimal min-cut} for $\X$ \cite{Avis:2021xnz}. For a given holographic graph model and subsystem $\X$, the \textit{entropy} of $\X$ is defined as the cost of a min-cut for $\X$. The collection of the values of the entropy for the various non-empty subsets of $[\N]$, conventionally ordered, is the entropy vector $\Svec$ of the graph model.

\subsection{Algorithm for simple tree graph models}
\label{subsec:algorithm}

Suppose that for some number of parties $\N$ we are given an entropy vector $\Svec$ that obeys all instances of SA and SSA, and we want to determine whether it can be realized by a holographic graph model, and in that case, find a graph realization. Since $\Svec$ obeys SA, its PMI $\pmi$ is well defined, and since it obeys SSA, $\pmi$ is a KC-PMI and can be described in terms of the correlation hypergraph $\hp$.

It is easy to see \cite{Hubeny:2024fjn} that $\hp$ is connected if and only if $\Svec$ has no vanishing entries. Based on this observation, and the fact that a similar property holds for graph models, \cite[Theorem 9]{graph-construction} then showed that if $\Svec$ \textit{does} have vanishing entries, then the problem of finding a graph model for $\Svec$ can easily be reduced to that of finding graph models for a collection of entropy vectors for fewer party systems which do not have any vanishing entry, and can be obtained directly from $\Svec$ (importantly, without knowing any graph realization or even if such a realization exists).

In what follows, a key role will be played by the \textit{line graph}\footnote{\,The line graph of a hypergraph is the intersection graph of its set of hyperedges.} of $\hp$, denoted by $\lhp$. For any max-clique $Q$ of $\lhp$, we define the set
\begin{equation}
Q^\cap = \bigcap_{h_\X\in Q} h_\X.
\end{equation}
It was shown in \cite[Theorem 9]{Hubeny:2024fjn} that for any KC-PMI, the cardinality of $Q^\cap$ satisfies the stringent bound $|Q^\cap|\leq 2$, and that it is strictly positive only if the correlation hypergraph has a particular structure. Using this result, \cite[Theorem 10]{graph-construction}  then showed that, similarly to the discussion above, the problem of finding a graph model for $\Svec$ can be reduced to that of finding graph models for a collection of entropy vectors for fewer party systems, which can be immediately derived from $\Svec$ and which obey $|Q^\cap|=0$ for all max-cliques of their line graphs.

Entropy vectors for which either of these reductions is possible are called ``reducible'', while any other entropy vector (obeying SA and SSA) is said to be ``irreducible''. Specifically, we introduce the following definition.

\begin{defi}
    An entropy vector $\Svec$ is said to be irreducible if it satisfies all instances of \emph{SA} and \emph{SSA}, it has no vanishing entries, and $Q^\cap=\varnothing$ for all max-cliques of $\lhp$. 
\end{defi}

The proof of \Cref{thm:main} then follows immediately from these results from \cite{graph-construction} and the proof of the analogous statement for irreducible entropy vectors. Specifically, in \S\ref{sec:proof} we will prove the following theorem.

\begin{thm}
\label{thm:main-reduced}
    An irreducible entropy vector can be realized by a holographic simple tree graph model if and only if the line graph of its correlation hypergraph is a chordal graph. 
\end{thm}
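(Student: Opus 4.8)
The plan is to prove the two implications separately, with the forward (necessity) direction inherited from prior work and the reverse (sufficiency) direction carrying the real content. For necessity, if an irreducible $\Svec$ is realized by a simple tree then chordality of $\lhp$ is exactly the statement already established in \cite{Hubeny:2024fjn}, so the irreducible case is an immediate specialization and requires no new argument. The substance is therefore to show that chordality of $\lhp$ is \emph{sufficient} for an irreducible $\Svec$, which I would do constructively by exhibiting a simple tree whose min-cuts reproduce $\Svec$, thereby certifying that the algorithm of \cite{graph-construction} always terminates successfully. The central tool is the structure theory of chordal graphs: chordality of $\lhp$ is equivalent to the existence of a perfect elimination ordering, to the existence of a clique tree (a tree on the max-cliques of $\lhp$ in which, for every vertex, the cliques containing it induce a subtree), and, by Gavril's characterization, to a representation of the hyperedges of $\hp$ as subtrees $\{T_{h}\}$ of a host tree $T$ with $h\cap h'\neq\varnothing \iff T_h\cap T_{h'}\neq\varnothing$.

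First I would fix such a host tree $T$ and use it as the backbone of the graph model. Since all hyperedges through a fixed party $\ell$ pairwise intersect, their subtrees pairwise intersect, and the Helly property of subtrees of a tree guarantees a common node; I would place the boundary vertex for $\ell$ at such a node, so that the topology of the model is read directly off $T$. Here the two irreducibility hypotheses beyond SA and SSA do distinct work: the absence of vanishing entries guarantees (via the connectivity criterion of \cite{Hubeny:2024fjn}) that $\hp$, and hence the backbone, is a single tree rather than a forest, while the condition $Q^\cap=\varnothing$ on every max-clique is what prevents distinct parties from being forced onto a common node, so that the placement assigns distinct parties to distinct boundary vertices and the resulting model is \emph{simple}.

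Next I would assign non-negative weights to the edges of $T$ so that, for every subsystem $X\subseteq\nsp$, the cost of the minimal min-cut equals the prescribed entropy $S(X)$. The natural choice keys each bulk edge to a separator of the clique tree and fixes its weight from the mutual-information data on the associated $\beta$-sets; equivalently, one solves the linear system "$\text{min-cut}(X)=S(X)$" on a spanning subfamily of subsystems—those indexed by the hyperedges and the clique separators—and then argues that the remaining constraints hold automatically. I would organize the whole construction as an induction along the perfect elimination ordering, mirroring the recursion of the algorithm in \cite{graph-construction}, maintaining the invariant that the partial tree assembled from the processed cliques realizes the entropies of the corresponding reduced system. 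Non-negativity of the weights I expect to follow step by step from SA and SSA, since each newly attached simplicial piece interacts only with an already-constructed clique and its increment is controlled by a single positivity inequality.

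The step I expect to be the main obstacle is the global min-cut verification. Having pinned the weights from a spanning subfamily of constraints, one must show that for \emph{every} subsystem $X$—including those whose $\beta$-set is vanishing or partial—the minimal min-cut in $T$ has exactly the prescribed cost, so that the correct mutual informations vanish or take their prescribed positive values and no competing cut of strictly smaller cost exists. This is precisely where chordality is indispensable: an induced cycle of length $\geq 4$ in $\lhp$ would obstruct the Helly-based placement and force two incompatible candidate min-cuts of unequal cost, rendering the linear system for the weights inconsistent. Chordality removes exactly these obstructions, and the perfect elimination ordering reduces the verification to local checks, each governed by a single clique together with its separator, which is what allows the induction to close and the algorithm to succeed.
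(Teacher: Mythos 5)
Your skeleton matches the paper's at the coarsest level: necessity is quoted from \cite{Hubeny:2024fjn}, and sufficiency is to be proved constructively by building a clique-tree backbone, attaching the parties, assigning weights from the entropy data, and verifying that the intended cuts are min-cuts. But the two steps that carry all the content are left as expectations rather than arguments, and the mechanisms you sketch for them are not the ones that work. The min-cut verification---which you yourself flag as the main obstacle---is \emph{not} reducible to ``local checks along a perfect elimination ordering,'' and chordality plays no role in it whatsoever. The paper's proof takes an \emph{arbitrary} competing $\X$-cut $C_\X$, contracts every uncut edge of $\gf{T}$ to get an edge-labeled rooted tree, and then iteratively destroys that tree by two moves (collapsing a ``rosette'' of leaves, then deleting the resulting ``bud''), each move adding explicit instances of SA or Araki--Lieb; the net effect exhibits $\norm{C_\X}-\ent_\X$ as a positive linear combination of SA instances. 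Competing cuts are genuinely non-local objects on the tree, which is why a global contraction argument is needed. Your appeal to SSA for step-by-step non-negativity of weights is also off target (the weights are components of $\Svec$, non-negative for free), and your claim that an induced $4$-cycle would make ``the linear system for the weights inconsistent'' points at the wrong failure mode: the paper shows (cf.\ \Cref{fig:SA-violation-example}) that the cost-matching step succeeds even for vectors \emph{violating} SA on the same chordal subspace---what fails there is precisely minimality of the cuts, which is why minimality needs its own argument from SA. Similarly, the cost-matching step itself hides the hardest lemma: if $\mi(\X:\Y)(\Svec)=0$ for disjoint $\X,\Y$, then no edge of $\gf{T}$ joins $\tc_\X$ to $\tc_\Y$ (the paper's \Cref{lem:disconneted-cuts}), whose proof is a contradiction argument using the no-vanishing-entries half of irreducibility in an essential way; your induction invariant never engages with it.

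There is also a concrete error in your construction. Placing the boundary vertex for party $\ell$ at a Helly node of the host tree, and claiming that $Q^\cap=\varnothing$ forces distinct parties onto distinct nodes, is false. For the $\N=3$ ``perfect state'' vector, the positive $\bsets$ are the four triples and $\bs{\nsp}$, so $\lhp$ is a complete graph with a single max-clique $Q$; one checks $Q^\cap=\varnothing$ and that the vector is irreducible, yet every party's clique $Q^\ell$ lies inside that same max-clique, so your placement piles all four boundary vertices onto one node and the model is not simple (indeed not a legal labeling at all). The paper sidesteps this entirely by attaching each party as a \emph{new leaf} hanging off the vertex of the unique max-clique containing $Q^\ell$, which makes simplicity automatic and, in the perfect-state example, reproduces the star graph. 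The hypothesis $Q^\cap=\varnothing$ enters the paper not to separate parties but as part of the definition of irreducibility, i.e., to exclude the cases already handled by the party-reduction theorems of \cite{graph-construction}.
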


A necessary condition for the realizability of an entropy vector $\Svec$ obeying SA and SSA by a holographic simple tree graph model was proven in \cite{Hubeny:2024fjn}, and it states that such a mode exists \textit{only if} $\lhp$ is a \textit{chordal graph}.\footnote{\,A graph is said to be chordal if it does not contain a cycle of length four or more without a chord.} Based on this condition, \cite{graph-construction} then proposed an algorithm for the construction of a simple tree graph model for any irreducible entropy vector $\Svec$ whose line graph is chordal. (We refer to the corresponding KC-PMI as chordal KC-PMI.) The algorithm is shown in Algorithm~\ref{alg:reconstruction}, and we refer the reader to \cite{graph-construction} for a discussion of the logic behind it and how it was developed. Here we briefly review the various steps mainly to clarify the definition and notation. (The numbers refer to those in the Algorithm.)

\begin{algorithm}[t]
\caption{Construction of simple tree graph model $\gf{T}$ realizing an entropy vector $\Svec$ satisfying the assumptions described in main text.}
\label{alg:reconstruction}
\BlankLine
\Input{an irreducible entropy vector $\Svec$ whose PMI is a chordal KC-PMI}
\Output{a simple tree graph model $\gf{T}$ realizing $\Svec$}
\BlankLine
$\pmi \leftarrow \text{the PMI of}\; \Svec$\;
$\pos \leftarrow \text{the set of positive $\bsets$ of}\; \pmi$\;
$\lhp \leftarrow \text{the intersection graph of}\; \pos$\;
\nl $\gf{K}_\pmi \leftarrow \text{the weighted clique graph of}\; \lhp$\;
\nl $\widetilde{\gf{T}} \leftarrow \text{a choice of a maximum spanning tree of}\; \gf{K}_\pmi$\;
$\gf{T} \leftarrow \widetilde{\gf{T}}$\;
\nl \For{each $\ell\in\nsp$}
{
    construct the clique $Q^\ell$ in $\lhp$\;
    find the unique max-clique $Q$ of $\lhp$ that contains $Q^\ell$\; 
    add to $\gf{T}$ a vertex $v_\ell$ for the party $\ell$\;
    add to $\gf{T}$ an edge connecting $v_\ell$ to the vertex $v_Q$ corresponding to $Q$\;
    }
\nl \For{each edge e of $\gf{T}$}
{
    $\Y\leftarrow$ the element of $\langle\X,\X^\complement\rangle_e$ that does not contain the purifier\;
    \eIf{$\Y\neq \varnothing$}
    {$w(e) \leftarrow \ent_{\Y}$\;}
    {$w(e) \leftarrow 0$\;}
    }
\end{algorithm}

\begin{enumerate}[label={\scriptsize \textbf{\arabic*})}]
\item The \textit{clique graph} $\gf{K}_\pmi$ of a graph $\gf{G}$ is the intersection graph of the set of max-cliques of $\gf{G}$. Its weighted version is obtained by adding to each edge $e$ a weight equal to the cardinality of the intersection of the max-cliques of $\gf{G}$ corresponding to the endpoints of $e$.

\item A \textit{maximum spanning tree} of a weighted graph $\gf{G}$ is a spanning tree $\widetilde{\gf{T}}$ such that the sum of the weights of the edges of $\gf{G}$ which are also in $\widetilde{\gf{T}}$ has the maximum value among all spanning trees. In general, a maximum spanning tree is not unique, and the algorithm simply prescribes to make an arbitrary choice.

Since (by assumption) $\lhp$ is a chordal graph, a maximum spanning tree $\widetilde{\gf{T}}$ of its weighted clique graph is a \textit{clique tree representation} of $\lhp$ \cite{book:intersection_graphs}. This is a tree whose vertex set is the set of max-cliques of $\lhp$, and such that for each vertex $v$ of $\lhp$, the set of max-cliques that contain $v$ induces a subtree of $\widetilde{\gf{T}}$.

\item The loop constructs a ``topological'' simple tree graph model $\gf{T}$ from $\widetilde{\gf{T}}$ by adding leaves for the $\N+1$ boundary vertices. For a party $\ell$, the clique $Q^\ell$ of $\lhp$ is the set of all hyperedges of $\hp$ which contain the vertex $v_\ell$. It was shown in \cite[Theorem 9]{Hubeny:2024fjn} that $Q^\ell$ is contained in a unique max-clique $Q$ of $\lhp$, and this uniqueness guarantees that there is no ambiguity in the construction.\footnote{\,This is not the case when $Q^\ell$ is empty, but this is possibility is excluded here by the assumption that $\Svec$ is irreducible.} 

\item The loop constructs a graph model by adding weights to the edges of $\gf{T}$. For each edge $e$, we denote by $\langle\X,\X^\complement\rangle_e$ the pair of subsystems corresponding to the boundary vertices of the two trees obtained from $\gf{T}$ after deleting $e$, and by $\Y$ the element of this pair that does not contain the purifier. The weight $w(e)$ is then chosen to be the component $\ent_\Y$ of $\Svec$. Notice that, a priori, it is possible that, for some edge $e$, the two subsystems are the empty set and the full system $\nsp$. In this case, there is no corresponding component in $\Svec$ and the algorithm assigns vanishing weight to $e$. However, we will see below that this situation never occurs.

\end{enumerate}

\section{Proof of \Cref{thm:main-reduced}}
\label{sec:proof}

Having reviewed the necessary background, we now begin the proof of \Cref{thm:main-reduced}. We assume that for an arbitrary number of parties $\N$ we are given an entropy vector $\Svec$ which is irreducible and has a chordal line graph. It is important to note that under these assumptions, Algorithm~\ref{alg:reconstruction} always gives a \textit{candidate} simple tree graph model $\gf{T}$ for $\Svec$ (which a priori is not unique because of the choice of $\widetilde{\gf{T}}$). What is far from obvious however, is that $\gf{T}$ is indeed a graph model realization of $\Svec$, or in other words, that the entropy vector computed from $\gf{T}$ using the standard min-cut prescription reviewed in \S\ref{subsec:graphs}, is indeed $\Svec$. Our strategy to prove \Cref{thm:main-reduced} will be to prove the following result, which highlights the constructive nature of our approach.

\begin{thm}
\label{thm:alg-suff}
    For any irreducible entropy vector $\Svec$, any holographic simple tree graph model $\gf{T}$ resulting from Algorithm~\ref{alg:reconstruction} is a realization of $\Svec$.
\end{thm}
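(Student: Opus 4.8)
The plan is to verify directly that the standard min-cut prescription on $\gf{T}$ reproduces $\Svec$ component by component: for every $\X\subseteq[\N]$ I will show that the cost of a min-cut for $\X$ equals $\ent_\X$. Since $\gf{T}$ is a tree whose boundary vertices are leaves, every $\X$-cut is specified by assigning each bulk vertex (a max-clique of $\lhp$) to the $\X$-side or to its complement, and its cost is $\sum_{e\in\ce(C)}\ent_{\Y_e}$, where $\Y_e$ denotes the purifier-free side of the bipartition induced by deleting $e$. Two preliminary points must be dispatched first. Non-negativity of all weights is immediate, since $\Svec$ has no vanishing entries and satisfies SA, so every $\ent_{\Y_e}>0$. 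That no edge is degenerate (i.e.\ $\Y_e\neq\varnothing$ and $\Y_e\neq\nsp$) follows from irreducibility: using $Q^\cap=\varnothing$ together with the clique-tree property of $\widetilde{\gf{T}}$, one checks that every component obtained by deleting an edge retains at least one attached party, so every edge genuinely separates the boundary. The base cases are then automatic, since deleting the leaf edge of a party $\ell$ isolates $v_\ell$ and yields min-cut value $\ent_\ell$ for singletons (and $\ent_{[\N]}$ for the purifier leaf), exactly as required.

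For the upper bound, that the min-cut for $\X$ has cost $\le\ent_\X$, I will exhibit an explicit $\X$-cut of cost exactly $\ent_\X$. The cut is dictated by the correlation structure: reading off $\Gamma(\X)$ and the clique tree $\widetilde{\gf{T}}$, I assign each max-clique to the $\X$-side according to a canonical rule tied to whether the subsystems it represents are dominated by $\X$, and then cut the resulting boundary. The substantive step is to show that the cost of this cut collapses to $\ent_\X$. I would evaluate $\sum_{e\in\ce(C)}\ent_{\Y_e}$ as a telescoping combination of entropies along $\widetilde{\gf{T}}$ and collapse it by repeated use of SSA \emph{saturated to equality}. These equalities are not accidental: the vanishing mutual informations recorded in $\pmi$ encode precisely the conditional-independence relations across the minimal separators $Q\cap Q'$ labelling the edges of the clique tree, and it is here that chordality enters decisively, since only a chordal $\lhp$ admits the clique-tree (junction-tree) representation that supplies these Markov relations.

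For the lower bound, that the min-cut for $\X$ has cost $\ge\ent_\X$, I must show that no $\X$-cut is cheaper. Because the cut edges come from a tree, the family $\{\Y_e : e\in\ce(C)\}$ of purifier-free sides is laminar, i.e.\ its members are pairwise nested or disjoint, and the $\X$-side components express $\X$ in terms of these sets. I would then bound $\sum_{e}\ent_{\Y_e}$ from below by $\ent_\X$ through repeated application of SA and SSA to this laminar family; the cleanest packaging is a max-flow--min-cut argument, in which one constructs a feasible flow between the $\X$-leaves and the $\X^\complement$-leaves of value $\ent_\X$, forcing the min-cut to be at least this large. The delicate cases are the ``inefficient'' cuts whose pieces $\Y_e$ straddle both $\X$ and $\X^\complement$, and excluding the possibility that such a cut undercuts $\ent_\X$ is exactly where the global chordal structure, rather than any single inequality, must be brought to bear.

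I expect the main obstacle to be this lower bound, namely proving optimality of the canonical cut against all competitors. The upper-bound collapse, though requiring care, is a finite telescoping controlled by the clique tree and the known $\Gamma$-partition combinatorics, whereas the lower bound is universally quantified over cuts and is precisely the point at which chordality must be invoked in an essential, non-local way to rule out shortcuts. A natural way to organize the entire argument, and to make the use of chordality tractable, is an induction that peels a leaf max-clique of $\widetilde{\gf{T}}$, reducing $\gf{T}$ to a smaller simple tree model to which the inductive hypothesis applies, while tracking how the min-cut values of $\X$ and of the affected separators transform under the peeling.
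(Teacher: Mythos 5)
Your two-part skeleton---exhibit an explicit $\X$-cut of cost $\ent_\X$, then show no competing cut is cheaper---is the same architecture as the paper's proof (\Cref{thm:C-equals-S} for the cost matching, followed by the min-cut theorem of \S\ref{subsec:min-cut}). But you have the role of chordality exactly inverted, and this is a genuine gap, not a presentational one. In the paper, the lower bound uses no chordality at all: for an arbitrary competing $\X$-cut $C_\X$ on the tree, one contracts the uncut edges, roots the resulting edge-labeled tree, and iteratively collapses it (generic rosettes into buds, buds removed), at each step adding an instance of SA or Araki--Lieb; the outcome is that $\ent_\X\leq\sum_{e\in\ce(C_\X)}\ent_{\Z(e)}$ is a positive linear combination of SA instances, valid for \emph{any} vector satisfying SA, on \emph{any} tree weighted by entropy components. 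The place where the chordal PMI structure does its work is the part you regard as a routine ``telescoping'': proving that the candidate cuts have cost exactly $\ent_\X$. That step is not a collapse by saturated SSA along the clique tree; it requires showing that for disjoint $\X,\Y$ with $\mi(\X:\Y)(\Svec)=0$ the induced subgraphs satisfy $\tc_{\X\Y}=\tc_\X\oplus\tc_\Y$ (\Cref{lem:disconneted-cuts}), whose proof is a contradiction argument combining separation properties of the cuts with irreducibility (a stray edge joining the two pieces would force some component of $\Svec$ to vanish). The example in \Cref{fig:SA-violation-example} makes this division of labor explicit: SA-violating vectors in the same chordal PMI subspace still enjoy the cost matching, but their cuts fail to be minimal---so minimality is the SA statement, and cost matching is the PMI/chordality statement.

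Beyond this inversion, your proposed mechanism for the lower bound is not yet an argument. The max-flow packaging is circular as stated: constructing a feasible flow of value $\ent_\X$ between the $\X$-leaves and the $\X^\complement$-leaves presupposes the very bound you want, since on a weighted tree the maximal flow value \emph{equals} the min-cut cost, which is the unknown quantity. Laminarity of $\{\Y_e\}$ is true but does not by itself tell you in what order to apply SA; the paper's rosette/bud reduction is precisely the missing bookkeeping that turns the tree cut into an explicit positive combination of SA instances, including the parity argument needed to verify that the surviving left-hand-side term is $\ent_\X$ and not some other subsystem. If you repair your outline by (a) moving chordality out of the lower bound and into the cost-matching step, where the disconnection lemma must be proved rather than assumed, and (b) replacing the flow heuristic with an explicit SA-combination scheme, you essentially arrive at the paper's proof.
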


We now summarize the logic of the strategy that we will follow to prove \Cref{thm:alg-suff}. Note that while Algorithm~\ref{alg:reconstruction} gives a candidate simple tree graph model $\gf{T}$ for $\Svec$, it does not immediately specify what the min-cuts (and in particular the minimal ones) for the various subsystems are. The first step of the proof will be to make a guess for the minimal min-cuts for the various subsystems, which is based on the intuition behind the algorithm presented in \cite{graph-construction}. Specifically, for each subsystem $\X\subseteq\nsp$, we make a choice\footnote{\,Notice that we do not use the notation $\mmC{\X}$ introduced above for minimal min-cuts. The reason for this is that showing that $\tcut{\X}$ is indeed a min-cut for $\X$ will be a key element of the proof.} of an $\X$-cut, denoted by $\tcut{\X}$, as follows.

\begin{enumerate}
    
    \item For each $\X$ such that $\bs{\X}\in\pos$ we choose the $\X$-cut
    \begin{equation}
    \label{eq:cut-pos-bs}
        \tcut{\X}= \{v_\ell|\, \ell\in\X\} \cup \{v_Q|\, \X\in Q\}.
    \end{equation}
    Since we are assuming that $\lhp$ is connected (by the fact that $\Svec$ is irreducible), $\X=\nsp$ is a special case of this class, and $\tcut{\nsp}$ is the whole vertex set of $\gf{T}$.
    
    \item For each single party $\ell$ we choose the $\X$-cut 
    \begin{equation}
    \label{eq:cut-single-party}
          \tcut{\ell}=\{v_\ell\}. 
    \end{equation}

    \item For each $\X$ such that $\bs{\X}\in\parv\cup\van$ and $|\X|\geq 2$, we choose the $\X$-cut
    \begin{equation}
    \label{eq:cut-partial-bs}
        \tcut{\X}=\bigcup_{\Y\in\Gamma(\X)}\tcut{\Y}.
    \end{equation}
    In the particular case where $\bs{\X}\in\van$, this simply gives $\tcut{\X}=\{v_\ell|\, \ell\in\X\}$. 

    \item In the trivial case $\X=\varnothing$, we specify the cut
    \begin{equation}
    \label{eq:cut-empty}
        \tcut{\varnothing}=\varnothing.
    \end{equation}

\end{enumerate}

Since in holographic graph models the purifier appear explicitly, for the purpose of the proof of \Cref{thm:alg-suff} it will be convenient to work in the vector space $\mathbb{R}^{2^{\N+1}}$, rather in conventional entropy space. To avoid introducing unnecessary new notation, in what follows we will also denote by $\Svec$ the entropy vector in this ``extended'' entropy space, whose components are, for all subsystems $\X$ which do not include the purifier the same as in the original entropy vector, for their complements (which therefore do contain the purifier)
\begin{equation}
\label{eq:S-extension1}
    \ent_{\X^\complement} = \ent_{\X}\, ,
\end{equation}
and finally
\begin{equation}
\label{eq:S-extension2}
    \ent_{\nsp}=\ent_{\varnothing}=0,
\end{equation}
which are therefore the only vanishing components.

Given the choice of cuts above, and a simple tree graph model $\gf{T}$ resulting from Algorithm~\ref{alg:reconstruction}, we then construct a vector $\vec{{\sf C}}$ in the same space as the new $\Svec$ by setting, for each $\X\subseteq \nsp$,
\begin{equation}
    {\sf C}_\X = \norm{\tcut{\X}}. 
\end{equation}
Notice that for $\vec{{\sf C}}$ it is a priori not clear whether ${\sf C}_\X={\sf C}_{\X^\complement}$, which will be a central aspect of the discussion.  To prove \Cref{thm:alg-suff}, it then suffices to first show that ${\sf C}_\X=\ent_\X$ for all subsystems $\X\subseteq \nsp$, which is the goal of \S\ref{subsec:cost-eq}, and then that each $\tcut{\X}$ is a min-cut, which is the goal of \S\ref{subsec:min-cut}. 

To simplify the presentation, in the statements of the various lemmas and theorems below we will always implicitly assume—and therefore not state explicitly—that one is given an irreducible\footnote{\,We remind the reader that the notion of irreducibility, in particular the fact that $\Svec$ has no vanishing components, is formulated in standard (rather than ``extended'') entropy space.} entropy vector $\Svec$ and has obtained a simple tree graph model $\gf{T}$ from Algorithm~\ref{alg:reconstruction} after a choice of $\widetilde{\gf{T}}$. The dependence of $\gf{T}$ on $\widetilde{\gf{T}}$ will be kept implicit, since, as we will see, it is immaterial. To simplify the notation, given two subsystems $\X,\Y$ we will often write $\X\Y$ instead of $\X\cup\Y$.

\subsection{The cost of the cuts $\tcut{\X}$ on $\gf{T}$ reproduces the components of $\Svec$}
\label{subsec:cost-eq}

We begin by briefly reviewing a simple result from \cite{Hubeny:2024fjn}, which  for an entropy vector $\Svec$ that obeys SA and SSA, translates part of the structure of the correlation hypergraph into linear relations among certain components of $\Svec$. Specifically,\footnote{\,The result of \cite{Hubeny:2024fjn} is slightly more general, since it only requires that the PMI of $\Svec$ is a KC-PMI; a requirement that in general is strictly weaker than SSA.} for each subsystem $\X$ such that $|\X|\geq 2$ and $\bs{\X}$ is \textit{not} positive\footnote{\,This relation is immediate to derive from the set of vanishing MI instances in $\bs{\X}$, which are made manifest by $\Gamma(\X)$.}
\begin{equation}
\label{eq:entvec-cost-non-pos-bsets}
    \ent_\X = \sum_{\Y\in\Gamma(\X)} \ent_\Y.
\end{equation}
Our first goal in this subsection will be to show that exactly the same relations are satisfied by the costs of the cuts $\tcut{\X}$, i.e.,
\begin{equation}
\label{eq:cost-non-pos-bsets}
    \norm{\tcut{\X}}=\sum_{\Y\,\in\,\Gamma(\X)} \norm{\tcut{\Y}}  \qquad \forall\, \X\;\; \text{such that}\;\; \bs{\X}\notin\pos.
\end{equation}

The first step is to show that, by construction, the cuts $\tcut{\X}$ satisfy two general properties of minimal min-cuts \cite{Avis:2021xnz}.

\begin{lemma}
\label{lem:fund-cut-properties}
    For any pair of subsystems $\X,\Y$:
    \begin{enumerate}[label={\emph{\footnotesize \roman*)}}]
        \item 
        $ \quad
             \X\cap\Y=\varnothing  \quad \iff \quad \tcut{\X}\cap\tcut{\Y}=\varnothing
        $
        \item 
        $ \quad
            \X\subseteq\Y \quad  \implies \quad \tcut{\X}\subseteq\tcut{\Y}
        $
    \end{enumerate}
\end{lemma}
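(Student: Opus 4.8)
The plan is to establish both properties directly from the explicit definitions of the cuts $\tcut{\X}$ given in \eqref{eq:cut-pos-bs}--\eqref{eq:cut-empty}, organized by which of the four cases each subsystem falls into. The key structural fact I would lean on throughout is the \emph{clique tree representation} property of $\widetilde{\gf{T}}$: for each party $\ell$, the set of max-cliques containing $v_\ell$ induces a subtree of $\widetilde{\gf{T}}$, and more generally, for each hyperedge $h_\X\in E$ (equivalently, each $\X$ with $\bs{\X}\in\pos$), the max-cliques $Q$ with $\X\in Q$ should also induce a connected subtree. This connectedness of the ``clique-support'' is what controls which bulk vertices $v_Q$ land in each cut, and it is the engine behind both claims.

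\paragraph{Property (ii), the monotonicity $\X\subseteq\Y\implies\tcut{\X}\subseteq\tcut{\Y}$.}
First I would reduce to the relevant cases. The boundary-vertex part of any cut $\tcut{\X}$ always contains exactly $\{v_\ell\mid\ell\in\X\}$ (this is immediate from each of the four defining cases, using that $\tcut{\X}$ is an $\X$-cut and that the single-party and vanishing cases contribute only boundary leaves). So the boundary part is manifestly monotone, and the content of (ii) lies entirely in the \emph{bulk} vertices $v_Q$. For positive $\bsets$, the bulk part of $\tcut{\X}$ is $\{v_Q\mid \X\in Q\}$, and I would show that $\X\subseteq\Y$ with both $\bsets$ positive forces $\{v_Q\mid\X\in Q\}\subseteq\{v_Q\mid\Y\in Q\}$ using Klein's condition together with the clique-tree structure; the essential input is that a max-clique containing the hyperedge $h_\X$ must also contain $h_\Y$ whenever $\X\subseteq\Y$ are both hyperedges. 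For the non-positive cases I would proceed by the recursive definition \eqref{eq:cut-partial-bs}: since $\tcut{\X}$ is a union of cuts over the blocks of $\Gamma(\X)$, and the $\Gamma$-partition of a larger set refines the relevant pieces of the smaller one in a compatible way, monotonicity propagates through the union. An induction on $|\X|$ handles the recursion cleanly, with the positive-$\bset$ case and the single-party case as base cases.

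\paragraph{Property (i), the disjointness equivalence.}
The forward direction ($\X\cap\Y=\varnothing\implies\tcut{\X}\cap\tcut{\Y}=\varnothing$) is the substantive half. The boundary vertices cause no trouble: disjoint $\X,\Y$ give disjoint leaf sets. The real work is to show that no bulk vertex $v_Q$ is shared, i.e.\ that there is no max-clique $Q$ lying ``on both sides.'' Here I expect to invoke the irreducibility hypothesis $Q^\cap=\varnothing$ for all max-cliques, which prevents two disjoint subsystems from simultaneously forcing the same $v_Q$ into both cuts; combined with the clique-tree separation property, disjoint supports in the correlation hypergraph translate into disjoint clique-supports in $\widetilde{\gf{T}}$. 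The reverse direction ($\tcut{\X}\cap\tcut{\Y}=\varnothing\implies\X\cap\Y=\varnothing$) follows from the contrapositive almost formally: if $\ell\in\X\cap\Y$ then $v_\ell$ lies in both cuts (boundary vertices are always included), so the cuts meet. For the non-positive cases I would again push through the recursive definition, reducing disjointness of $\tcut{\X}$ and $\tcut{\Y}$ to pairwise disjointness of the cuts over blocks of $\Gamma(\X)$ and $\Gamma(\Y)$.

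\paragraph{Main obstacle.}
I expect the hardest step to be the forward direction of (i) in the bulk: ruling out a shared $v_Q$ for disjoint $\X,\Y$. This is precisely where the geometry of the clique tree and the irreducibility bound $Q^\cap=\varnothing$ must be combined carefully, and where a naive argument could fail if two unrelated hyperedges happened to be co-maximal in the same clique. Making the clique-tree ``subtree intersection'' picture rigorous—so that disjointness of party sets genuinely implies disjointness of the bulk-vertex supports—is the crux, and everything else is bookkeeping over the four cases plus an induction on subsystem size.
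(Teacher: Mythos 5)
Your plan for part (ii) and for the reverse direction of part (i) is essentially sound and follows the paper's route: the key input you name---that any max-clique containing $h_{\X}$ must also contain $h_{\Y}$ whenever $\X\subseteq\Y$ both have positive $\bsets$---is exactly the fact the paper uses, although its proof needs neither Klein's condition nor the clique tree: every member of such a max-clique intersects $h_{\X}\subseteq h_{\Y}$, hence intersects $h_{\Y}$, so maximality forces $h_{\Y}$ into the clique. Likewise, your reduction of the non-positive cases to positive/singleton blocks via the $\Gamma$-partition matches the paper's case analysis.

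The genuine gap is in the step you yourself flag as the crux: the forward direction of (i) in the bulk. You propose to rule out a shared bulk vertex $v_Q$ by invoking irreducibility ($Q^\cap=\varnothing$ for all max-cliques) combined with a clique-tree separation property. This tool cannot do the job, and it is also unnecessary. Recall that $\lhp$ is the \emph{intersection graph} of the positive $\bsets$: two hyperedges $h_{\X'}$ and $h_{\Y'}$ are adjacent in $\lhp$ if and only if they intersect. If $\X'\cap\Y'=\varnothing$, these hyperedges are non-adjacent, and since a clique requires pairwise adjacency, \emph{no} clique---in particular no max-clique $Q$---can contain both. Hence no $v_Q$ lies in both $\tcut{\X'}$ and $\tcut{\Y'}$; this one-line observation, after distributing the intersection over the $\Gamma$-partition blocks exactly as you propose (with singleton blocks handled by the single-party definition of the cuts), is the paper's entire argument. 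By contrast, the condition $Q^\cap=\varnothing$ points the wrong way: if a max-clique \emph{did} contain two disjoint hyperedges, that would automatically imply $Q^\cap=\varnothing$, so such a configuration is perfectly \emph{consistent} with irreducibility rather than excluded by it. The scenario you worry about (``two unrelated hyperedges co-maximal in the same clique'') is impossible for disjoint hyperedges for the elementary reason above, and no appeal to irreducibility or clique-tree geometry can substitute for it.
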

\begin{proof}
    If either $\X$ or $\Y$ is empty, by \eqref{eq:cut-empty} all these statements are trivial, and we assume $\text{min}(|\X|,|\Y|)\geq 1$.

    (i) $(\Leftarrow)$ Follows trivially from the definition of the cuts $\tcut{\X}$ and $\tcut{\Y}$. 
    
    (i) $(\Rightarrow)$ Suppose that $\X\cap\Y=\varnothing$ and consider the definition of the cuts given in \eqref{eq:cut-partial-bs}. We rewrite
    \begin{align}
    \label{eq:cuts-intersection}
        \tcut{\X}\cap\tcut{\Y} & = \left(\bigcup_{\X'\in\Gamma(\X)}\tcut{\X'}\right) \cap \left(\bigcup_{\Y'\in\Gamma(\Y)}\tcut{\Y'}\right) \nonumber\\
        & = \bigcup_{\substack{\X'\in\Gamma(\X)\\ \Y'\in\Gamma(\Y)}} \tcut{\X'} \cap \tcut{\Y'} \, ,
    \end{align}
    where we have simply used distributivity of intersection over union. In \eqref{eq:cuts-intersection}, if $\text{min}(|\X'|,|\Y'|)=1$, then $\tcut{\X'} \cap \tcut{\Y'}=\varnothing$ by the definition of the cuts for single parties (cf., \eqref{eq:cut-single-party}) and the assumption $\X\cap\Y=\varnothing$. 

    Suppose then that $\text{min}(|\X'|,|\Y'|)>1$, and therefore that $\bs{\X'}$ and $\bs{\Y'}$ are both positive. In this case these $\bsets$ are both vertices of $\lhp$. Since we are assuming $\X\cap\Y=\varnothing$ we also have $\X'\cap\Y'=\varnothing$, and any max-clique of $\lhp$ either contains $\bs{X'}$, or $\bs{Y'}$, or neither of them, but it cannot contain both. Therefore, from the definition of $\tcut{\X'}$ and $\tcut{\Y'}$ in \eqref{eq:cut-pos-bs}, it follows that $\tcut{\X'}\cap\tcut{\Y'}=\varnothing$. Since we have shown that $\tcut{\X'}\cap\tcut{\Y'}=\varnothing$ for any $\X'\in\Gamma(X)$ and $\Y'\in\Gamma(Y)$, we obtain $\tcut{\X}\cap\tcut{\Y}=\varnothing$. 

    (ii) By definition of $\tcut{\X}$ and $\tcut{\Y}$, and the assumption $\X\subseteq\Y$, every boundary vertex in $\tcut{\X}$ is also in $\tcut{\Y}$, and we need to show the inclusion for bulk vertices. We will distinguish four cases, depending on whether $\bs{\X}$ and $\bs{\Y}$ are positive or not. If they are both positive, they are both vertices of $\lhp$, and any max-clique of $\lhp$ containing the vertex $\bs{\X}$ also contains $\bs{\Y}$, proving the inclusion. If $\bs{\X}$ is positive but $\bs{\Y}$ is not, then (by the definition of $\Gamma$-partitions) there is some $\Z\in\Gamma(\Y)$ such that $\bs{\Z}$ is positive and $\X\subseteq\Z$, and the inclusion follows from the same argument as above, now applied to $\X$ and $\Z$. The same logic also applies to the case where $\bs{\X}$ is not positive and $\bs{\Y}$ is, since in this case, for any component $\Z$ of $\Gamma(\X)$, we have $\Z\subseteq\Y$. The last case, where neither $\bs{\X}$ nor $\bs{\Y}$ is positive, is again analogous, as it suffices to notice that each element of $\Gamma(\X)$ is a subset of an element of $\Gamma(\Y)$. 
\end{proof}

Notice that in the proof of (ii) of \Cref{lem:fund-cut-properties}, one could have argued that if $\X\subseteq\Y$ then $\X\cap\Y^\complement=\varnothing$, implying that $\tcut{\X}\cap\tcut{\Y^\complement}=\varnothing$. However, it is important to notice that this does not imply that $\tcut{\X}\subseteq\tcut{\Y}$, because in general it is not the case that $\tcut{\Y^\complement}=(\tcut{\Y})^\complement$. In fact, in general we only have $\tcut{\Y}\cup\tcut{\Y^\complement}\subseteq V(\gf{T})$, but the inclusion can be strict. For example, for the $\N=3$ star graph model realizing the ``perfect state'' \cite{Bao:2015bfa}, for any $\Z$ with $|\Z|=2$, neither $\tcut{\Z}$ nor $\tcut{\Z^\complement}$ contains the bulk vertex.\footnote{\,In fact, the relation $\tcut{\Z^\complement}=(\tcut{\Z})^\complement$ must always be false if the min-cut for $\Z$ is not unique, since the minimal min-cut is complementary to the maximal min-cut of the complementary subsystem.} 

We now introduce a convenient notation that we will use extensively below. For a choice of $\widetilde{\gf{T}}$ and subsystem $\X$, we denote by $\tc_\X$ the subgraph of $\gf{T}$ induced by $\tcut{\X}$. Notice that for every subsystem $\X$, the definition of $\tcut{\X}$ is independent from the choice of $\widetilde{\gf{T}}$, but the graph $\tc_\X$ does depend on this choice. Nevertheless, as usual, we will keep such dependence implicit in what follows. Furthermore, we stress that for an arbitrary choice of $\X$, the subgraph $\tc_\X$ is not necessarily a tree, since it can be disconnected.

\begin{lemma}
\label{lem:positive-bs-connected}
    For any subsystem $\X$ whose $\bset$ is positive, the subgraph $\tc_\X$ of $\gf{T}$ induced by the cut $\tcut{\X}$ is a subtree of $\gf{T}$.
\end{lemma}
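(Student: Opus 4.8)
The plan is to split $\tcut{\X}$ into its bulk and boundary pieces and show that both hang together on a single connected core. By \eqref{eq:cut-pos-bs}, since $\bs{\X}$ is positive, $\tcut{\X}$ is the union of the boundary vertices $\{v_\ell : \ell\in\X\}$ and the bulk vertices $\{v_Q : h_\X\in Q\}$, and $\tc_\X$ is by definition the subgraph of $\gf{T}$ induced on exactly this vertex set. Because $\gf{T}$ is a tree, it suffices to prove that $\tc_\X$ is connected; acyclicity is then automatic.

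First I would handle the bulk vertices. These are precisely the $v_Q$ whose max-clique $Q$ of $\lhp$ contains the vertex $h_\X$. Under the standing chordality assumption, $\widetilde{\gf{T}}$ is a clique tree representation of $\lhp$, which by definition means that for every vertex of $\lhp$ the set of max-cliques containing it induces a subtree of $\widetilde{\gf{T}}$. Applying this to $h_\X$ shows that the bulk part of $\tcut{\X}$ induces a (nonempty) subtree of $\widetilde{\gf{T}}$, so the bulk-bulk edges of $\tc_\X$ already form a connected core.

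Next I would attach the boundary leaves. Each $v_\ell$ with $\ell\in\X$ is joined in $\gf{T}$ to the unique max-clique $Q$ containing $Q^\ell$. Since $\ell\in\X$ gives $v_\ell\in h_\X$, the hyperedge $h_\X$ is one of the hyperedges through $v_\ell$, so $h_\X\in Q^\ell\subseteq Q$; therefore $v_Q\in\tcut{\X}$ and the edge $\{v_\ell,v_Q\}$ belongs to $\tc_\X$, attaching $v_\ell$ directly to the bulk core. Since every boundary vertex of $\tcut{\X}$ is adjacent in $\tc_\X$ to that core, $\tc_\X$ is connected, and being a subgraph of the tree $\gf{T}$ it is acyclic; a connected acyclic graph is a tree, as required.

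The crux of the argument is the bulk-connectedness step, which is exactly where chordality is used, through the clique tree representation property; this is the one place where the hypothesis of \Cref{thm:main-reduced} really bites. The leaf-attachment step is then a direct consequence of the containment $Q^\ell\subseteq Q$ built into the construction of $\gf{T}$ together with the uniqueness of $Q$ established in \cite{Hubeny:2024fjn}, and the final acyclicity comes for free because $\tc_\X$ lives inside a tree.
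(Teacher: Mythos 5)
Your proposal is correct and follows essentially the same route as the paper's own proof: acyclicity is inherited from $\gf{T}$, the clique tree representation property of $\widetilde{\gf{T}}$ gives connectedness of the bulk vertices $\{v_Q \,|\, h_\X\in Q\}$, and each boundary leaf $v_\ell$ ($\ell\in\X$) attaches to this core because $h_\X\in Q^\ell\subseteq Q$ forces $v_Q\in\tcut{\X}$. No substantive difference from the paper's argument.
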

\begin{proof}
    Since $\gf{T}$ is a tree it has no cycles, therefore $\tc_\X$ has no cycles either and we only have to show that it is connected. By construction, any $\widetilde{\gf{T}}$ chosen in Algorithm~\ref{alg:reconstruction} is a clique tree representation of $\lhp$. By the definition of a clique tree, for any $\X$ such that $\bs{\X}\in\pos$, the subgraph of $\widetilde{\gf{T}}$ induced by the set of vertices corresponding to the max-cliques of $\lhp$ that contain $\bs{\X}$ is a subtree of $\widetilde{\gf{T}}$. By definition, $\tcut{\X}$ contains precisely these vertices as well as the boundary vertices corresponding to the parties in $\X$. For any $\ell\in\X$, there is by construction an edge in $\gf{T}$ connecting $v_\ell$ to the vertex $v_Q$ corresponding to the unique max-clique $Q$ of $\lhp$ that contains $Q_\ell$. Since (by definition of $Q^\ell$) $\bs{\X}$ is an element of $Q_\ell$, and $\tcut{\X}$ contains all vertices corresponding to all max-cliques that contain $\bs{\X}$, it also contains $v_Q$. Therefore, the leaf $v_l$ of $\gf{T}$ is connected to a vertex in $\tcut{\X}$ and repeating the same argument for all $\ell\in\X$ we complete the proof. 
\end{proof}

\Cref{lem:positive-bs-connected} pertains to any subsystem whose $\bset$ is positive, and we will compute its cost at the end of this subsection (cf., \Cref{thm:C-equals-S}). 
Consider instead a subsystem $\X$ whose $\bset$ is vanishing. In this case we have 
\begin{equation}
    \tc_\X = \bigoplus_{\ell\in\X} \tc_\ell,
\end{equation}
since $\tcut{\X}$ is simply the collection of all boundary vertices for the parties in $\X$, and in $\gf{T}$ they are all leaves by construction. The decomposition of $\tc_\X$ above then immediately implies that the cost of $\tcut{\X}$ obeys \eqref{eq:cost-non-pos-bsets}. 

On the other hand, if the $\bset$ of $\X$ is partial, note that even if the cut $\tcut{\X}$ is defined as the union of the cuts $\tcut{\Y}$ for the components of $\Gamma(\X)$ (cf., \eqref{eq:cut-partial-bs}), \Cref{lem:positive-bs-connected} does not imply that the connected components of $\tc_\X$ are precisely the subgraphs induced by these cuts. Specifically, at this stage it is not clear whether
\begin{equation}
    \tc_\X = \bigoplus_{\Y\in\Gamma(\X)} \tc_\Y,
\end{equation}
since there might be $\Y_1,\Y_2\in\Gamma(\X)$ such that there is an edge in $\gf{T}$ connecting a vertex in $\tc_{\Y_1}$ to one in $\tc_{\Y_2}$, in which case $\tc_{\Y_1\Y_2}\neq\tc_{\Y_1}\oplus \tc_{\Y_2}$. 

In particular, note that if $\bs{\X}$ is partial, then for $\Svec$ we have $\mi(\Y_1:\Y_2)=0$. But if indeed there is an edge in $\gf{T}$ connecting a vertex in $\tc_{\Y_1}$ to one in $\tc_{\Y_2}$, then the cost of $\tcut{\Y_1\Y_2}$ is strictly less than sum of the costs of $\tcut{\Y_1}$ and $\tcut{\Y_2}$ (since the weight of the edge connecting $\tc_{\Y_1}$ and $\tc_{\Y_2}$ contributes to the costs of $\tc_{\Y_1}$ and $\tc_{\Y_2}$ but not to that of $\tcut{\Y_1\Y_2}$) and the mutual information between $\Y_1$ and $\Y_2$ computed on $\gf{T}$ using the min-cut prescription is necessarily strictly positive, implying that Algorithm~\ref{alg:reconstruction} does not reproduce $\Svec$. 

The next step of the proof will be to show that this situation can never arise. This will be the main content of \Cref{lem:disconneted-cuts} below, but before we prove this result, it is useful to consider a simpler version of this situation, which is the content of \Cref{lem:separation}. To state the lemma, we introduce the following terminology: For a choice of disjoint subsystems $\X,\Y$, we say that $\X$ and $\Y$ are \textit{separated by a subsystem $\Z$ in the tree $\gf{T}$} if $\Z\cap(\X\cup\Y)=\varnothing$, and for any vertex $v_\X$ in $\tcut{\X}$ and any vertex $v_\Y$ in $\tcut{\Y}$, the path in $\gf{T}$ from $v_\X$ to $v_\Y$ contains at least one vertex of $\tcut{\Z}$. 

\begin{lemma}
\label{lem:separation}
    For any disjoint subsystems $\X,\Y$, if $\X$ and $\Y$ are separated by $\Z$ in $\gf{T}$, then $\mi(\X:\Y)(\Svec)=0$.  
\end{lemma}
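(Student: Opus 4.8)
The plan is to reduce the geometric separation hypothesis to a purely combinatorial statement about the $\Gamma$-partition of $\X\Y$, and then invoke the characterization of vanishing mutual information recalled in \S\ref{subsec:basics}. Since $\pmi$ is a KC-PMI, $\mi(\X:\Y)(\Svec)=0$ is equivalent to $\mi(\X:\Y)\in\pmi$, which holds precisely when $\bs{\X\Y}$ is \emph{not} positive and the bipartition $\{\X,\Y\}$ of $\X\Y$ is coarser than $\Gamma(\X\Y)$. As $\{\X,\Y\}$ already partitions $\X\Y$, the coarsening condition just requires that no component of $\Gamma(\X\Y)$ straddle both $\X$ and $\Y$; single-party components trivially lie on one side, so only the positive components of size at least two matter. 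The whole statement therefore collapses to a single claim: there is no subsystem $\W\subseteq\X\Y$ with $\bs{\W}\in\pos$ and $\W\cap\X\neq\varnothing$, $\W\cap\Y\neq\varnothing$. Taking $\W=\X\Y$ itself rules out $\bs{\X\Y}\in\pos$, and taking $\W$ to be a positive component of $\Gamma(\X\Y)$ enforces the coarsening condition.

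I would establish this single claim by contradiction, and this is where separation enters. Suppose such a $\W$ exists. Since $\bs{\W}\in\pos$, \Cref{lem:positive-bs-connected} guarantees that $\tc_\W$ is a subtree of $\gf{T}$, and $\tcut{\W}$ contains the leaf $v_\ell$ for every $\ell\in\W$. Choosing $a\in\W\cap\X$ and $b\in\W\cap\Y$, the leaves $v_a$ and $v_b$ both lie in $\tcut{\W}$, while $v_a\in\tcut{\X}$ and $v_b\in\tcut{\Y}$ because every cut $\tcut{\,\cdot\,}$ contains the boundary leaves of its own parties. As $\tc_\W$ is connected, the unique path in $\gf{T}$ from $v_a$ to $v_b$ lies entirely inside $\tcut{\W}$; but the separation hypothesis forces this very path to meet $\tcut{\Z}$, so $\tcut{\W}\cap\tcut{\Z}\neq\varnothing$.

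The contradiction comes from the first fundamental cut property. Since $\W\subseteq\X\Y$ and separation requires $\Z\cap(\X\cup\Y)=\varnothing$, we have $\W\cap\Z=\varnothing$, so \Cref{lem:fund-cut-properties}(i) gives $\tcut{\W}\cap\tcut{\Z}=\varnothing$, contradicting the previous step. Hence no such $\W$ exists, and by the reduction above $\mi(\X:\Y)\in\pmi$, i.e. $\mi(\X:\Y)(\Svec)=0$. The degenerate case in which $\X$ or $\Y$ is empty is immediate, since then $\mi(\X:\Y)=0$ trivially.

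I expect the principal subtlety—rather than a genuine obstacle—to be the bookkeeping behind two facts: first, that every cut $\tcut{\,\cdot\,}$, whether its $\bset$ is positive, partial, or vanishing, contains the boundary leaves of all its parties (which follows by unfolding the definition \eqref{eq:cut-partial-bs} of cuts through the $\Gamma$-partition down to the positive and single-party cases), so that $v_a\in\tcut{\X}$ and $v_b\in\tcut{\Y}$ as claimed; and second, that because $\gf{T}$ is a tree and $\tc_\W$ a subtree, the path joining $v_a$ and $v_b$ inside $\tcut{\W}$ coincides with the unique $\gf{T}$-path to which the separation hypothesis applies. Once these are in place the argument is a short contradiction resting entirely on \Cref{lem:positive-bs-connected} and \Cref{lem:fund-cut-properties}(i).
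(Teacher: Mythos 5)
Your proof is correct and follows essentially the same route as the paper's: a contradiction argument in which a positive $\bset$ straddling $\X$ and $\Y$ yields, via \Cref{lem:positive-bs-connected}, a connected subtree joining a vertex of $\tcut{\X}$ to a vertex of $\tcut{\Y}$, which the separation hypothesis forces to meet $\tcut{\Z}$, contradicting \Cref{lem:fund-cut-properties}(i). The only difference is organizational: the paper splits into the cases $\bs{\X\Y}$ positive versus partial (invoking \Cref{lem:fund-cut-properties}(ii) in the latter to pass from $\X',\Y'$ to $\X,\Y$), whereas you first reduce both cases to a single claim about a straddling positive subsystem $\W$ and then use the boundary leaves $v_a,v_b$ directly, which is a slightly cleaner packaging of the identical underlying argument.
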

\begin{proof}
    We proceed by contradiction. Suppose that there are disjoint subsystems $\X,\Y$ separated by a subsystem $\Z$, and that $\mi(\X:\Y)(\Svec)\neq 0$. By the latter assumption, $\bs{\X\Y}$ is either positive of partial. However, it cannot be positive, since otherwise, by \Cref{lem:positive-bs-connected}, $\tc_{\X\Y}$ is connected, and since $\X$ and $\Y$ are separated by $\Z$, it must be that $\tcut{\X\Y}$ intersects $\tcut{\Z}$, contradicting (i) of \Cref{lem:fund-cut-properties}. If instead $\bs{\X\Y}$ is partial, the assumption $\mi(\X:\Y)(\Svec)\neq 0$ implies that there is a subsystem $\X'\Y'$ such that $\bs{\X'\Y'}$ is positive and 
    \begin{equation}
        \varnothing\neq\X'\subseteq\X, \qquad \varnothing\neq\Y'\subseteq\Y.
    \end{equation}
    By \Cref{lem:positive-bs-connected}, $\tc_{\X'\Y'}$ is connected but, by (ii) of \Cref{lem:fund-cut-properties}, we have $\tcut{\X'}\subseteq\tcut{\X}$ and $\tcut{\Y'}\subseteq\tcut{\Y}$, implying that $\X'$ and $\Y'$ are separated by $\Z$, again contradicting (i) of \Cref{lem:fund-cut-properties}.
\end{proof}

Using \Cref{lem:separation} we can then prove the following key intermediate result.

\begin{lemma}
\label{lem:disconneted-cuts}
    For any pair of disjoint subsystems $\X$ and $\Y$
    \begin{equation}
        \tc_{\X\Y}=\tc_{\X}\oplus \tc_{\Y} \quad \iff \quad \mi(\X:\Y)(\Svec)=0.
    \end{equation}
\end{lemma}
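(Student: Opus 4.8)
The plan is to prove the biconditional by treating the two implications separately, and to record at the outset that for disjoint $\X,\Y$ the identity $\tc_{\X\Y}=\tc_\X\oplus\tc_\Y$ unpacks into two statements: (a) the vertex sets agree, $\tcut{\X\Y}=\tcut\X\sqcup\tcut\Y$ (disjointness being automatic from (i) of \Cref{lem:fund-cut-properties}), and (b) no edge of $\gf{T}$ joins $\tcut\X$ to $\tcut\Y$. For the direction $\tc_{\X\Y}=\tc_\X\oplus\tc_\Y\Rightarrow\mi(\X:\Y)(\Svec)=0$ I would argue by contraposition. If $\mi(\X:\Y)(\Svec)\neq 0$ then $\bs{\X\Y}$ is positive or partial. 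In the positive case, \Cref{lem:positive-bs-connected} makes $\tc_{\X\Y}$ a single subtree on at least two vertices, which cannot equal the disjoint union of the two nonempty graphs $\tc_\X,\tc_\Y$. In the partial case, $\mi(\X:\Y)(\Svec)\neq 0$ forces some block $\W\in\Gamma(\X\Y)$ with positive $\bset$ to straddle the bipartition, $\W\cap\X\neq\varnothing\neq\W\cap\Y$; then $\tc_\W$ is connected (\Cref{lem:positive-bs-connected}) and, by (i)--(ii) of \Cref{lem:fund-cut-properties}, meets both $\tcut\X$ and $\tcut\Y$ while sitting inside $\tc_{\X\Y}$, giving a path between the two sides that contradicts the decomposition.

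For the converse, assume $\mi(\X:\Y)(\Svec)=0$, so $\bs{\X\Y}$ is not positive and every block of $\Gamma(\X\Y)$ lies entirely inside $\X$ or inside $\Y$; write $\Gamma_\X,\Gamma_\Y$ for the two families of blocks. I would first establish (a) by showing $\Gamma_\X=\Gamma(\X)$ (and symmetrically for $\Y$): a maximal positive proper subset of $\X\Y$ that happens to lie in $\X$ is clearly maximal within $\X$, while conversely a maximal positive proper subset of $\X$ cannot be enlarged inside $\X\Y$, since any positive enlargement would split across $\X$ and $\Y$ and hence, by Klein's condition together with $\mi(\X:\Y)(\Svec)=0$, have a vanishing mutual information between its two parts --- impossible for a positive $\bset$. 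The leftover singletons then match automatically, and the subcases where $\bs\X$ is positive or vanishing are immediate. This gives $\tcut{\X\Y}=\tcut\X\sqcup\tcut\Y$.

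It remains to prove (b). Because the cuts of distinct $\Gamma$-blocks are disjoint (again (i) of \Cref{lem:fund-cut-properties}), any hypothetical crossing edge has endpoints in $\tc_\W$ and $\tc_{\W'}$ for unique blocks $\W\in\Gamma(\X)$, $\W'\in\Gamma(\Y)$, which are disjoint with $\mi(\W:\W')(\Svec)=0$. Since the boundary vertices of $\gf{T}$ are leaves attached to clique vertices, such an edge is of boundary--clique or clique--clique type. When both blocks are positive and the edge forces $h_\W$ and $h_{\W'}$ into a common max-clique $Q$ of $\lhp$, one gets $\W\cap\W'\neq\varnothing$ at once, a contradiction. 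The remaining configurations (a clique--clique edge of the clique tree $\widetilde{\gf{T}}$, or an edge incident to a singleton block) produce, via the running-intersection/separator property of $\widetilde{\gf{T}}$, a positive subsystem $\Z$ whose hyperedge lies in the relevant max-clique(s) and which meets both $\W$ and $\W'$; the goal is then to contradict $\mi(\X:\Y)(\Svec)=0$ using positivity of $\bs\Z$ and Klein's condition.

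The main obstacle is precisely this final step. Positivity of $\bs\Z$ only guarantees nonvanishing mutual information for bipartitions of \emph{all} of $\Z$, whereas $\mi(\X:\Y)(\Svec)=0$ only controls $\mi(\Z\cap\X:\Z\cap\Y)(\Svec)$; the two coincide only when $\Z\subseteq\X\cup\Y$, and a priori the separator hyperedge may carry parties outside $\X\cup\Y$, so Klein's condition alone does not close the argument. Resolving this should amount to showing that such a crossing edge would violate the maximality of $\W$ or $\W'$ as positive proper subsets of $\X\Y$, and I expect to need the structural constraints special to irreducible chordal KC-PMIs --- in particular $Q^\cap=\varnothing$ for every max-clique and the resulting constrained overlap pattern among the hyperedges of a max-clique established in \cite{Hubeny:2024fjn} --- in order to force the separator's parties to split cleanly between $\X$ and $\Y$.
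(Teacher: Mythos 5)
Your forward direction is correct and is essentially the paper's own argument (the paper phrases it via the contrapositive of \Cref{lem:positive-bs-connected} applied to blocks of $\Gamma(\X\Y)$, you phrase it via a straddling block; same content). Your vertex-set matching step, showing that the blocks of $\Gamma(\X\Y)$ lying in $\X$ are exactly the blocks of $\Gamma(\X)$, is also sound, and in fact makes explicit something the paper leaves implicit. The genuine gap is exactly where you flag it: ruling out a crossing edge in the ($\Leftarrow$) direction, which is the heart of the lemma. Moreover, the resolution you anticipate cannot work in the form you describe. For an irreducible $\Svec$ one has $\bs{\nsp}\in\pos$, so the hyperedge $h_{\nsp}$ meets every other hyperedge and therefore lies in \emph{every} max-clique of $\lhp$; consequently the separator $Q\cap Q'$ of every edge of $\widetilde{\gf{T}}$ contains $h_{\nsp}$, a hyperedge whose parties can never ``split cleanly between $\X$ and $\Y$'' unless $\X\Y=\nsp$. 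Worse, nothing guarantees the separator contains any hyperedge besides $h_{\nsp}$. So the separator-based route always produces at least one hyperedge for which your final step fails, and possibly no usable hyperedge at all; the obstruction you identified is structural, not a technical loose end. (The same issue afflicts your boundary--clique case when the boundary party is a singleton block of $\Gamma(\X)$: the clique $Q^\ell$ always contains $h_{\nsp}$, and possibly nothing else relevant.)

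The paper closes this direction by abandoning hypergraph combinatorics entirely in favor of an entropy-counting argument, and this is the missing idea. Given a crossing edge $e$ joining $\tcut{\X}$ to $\tcut{\Y}$, delete $e$ from $\gf{T}$: the two resulting components induce a partition $\{\X_*,\X',\Y_*,\Y'\}$ of $\nsp$, where $\X_*$ is the block of $\Gamma(\X)$ whose cut contains the endpoint of $e$ on the $\X$ side, $\X'$ collects the remaining parties on that side of the tree, and similarly for $\Y_*,\Y'$. The tree topology makes several separations manifest, and \Cref{lem:separation} (which you never invoke, and which exists in the paper precisely for this purpose) converts them into vanishing mutual informations: $\mi(\X':\Y')$, $\mi(\X_*:\Y')$, $\mi(\X':\Y_*)$ and $\mi(\X':\Y_*\Y')$ all vanish, while $\mi(\X_*:\Y_*)(\Svec)=0$ follows from the hypothesis and Klein's condition. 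Writing these as linear relations among entropies (using $\ent_\Z=\ent_{\Z^\complement}$) and subtracting yields $\ent_{\X'}=\ent_{\X_*}$ and then $\ent_{\Y_*\Y'}=0$, contradicting irreducibility. Note where irreducibility actually enters: not through $Q^\cap=\varnothing$ as you conjectured, but through the non-vanishing of the entropy components. Any repair of your combinatorial strategy would still need an input of this strength, so the separation lemma is the natural tool to reach for.
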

\begin{proof}
    ($\Rightarrow$) Consider two disjoint subsystems $\X,\Y$ such that $\tc_{\X\Y}=\tc_{\X}\oplus \tc_{\Y}$. By the contrapositive of \Cref{lem:positive-bs-connected}, $\bs{\X\Y}\notin\pos$. If $\bs{\X\Y}\in\van$, the statement is trivially true. Suppose then that $\bs{\X\Y}\in\parv$ and consider an element $\bs{\Z}$ in the partition $\Gamma(\X\Y)$ such that $|\Z|\geq 2$. By \Cref{lem:positive-bs-connected}, $\gf{T}_\Z$ is a subtree of $\gf{T}$. Since $\Z\subseteq\X\Y$, by (ii) of \Cref{lem:fund-cut-properties} we have $\tcut{\Z}\subseteq\tcut{\X\Y}$, and since $\tc_{\X\Y}$ is disconnected, $\gf{T}_\Z$ must be a subtree of $\tc_{\X}$ or $\tc_{\Y}$. Therefore the bipartition of $\X\Y$ into $\X$ and $\Y$ does not split any element of $\Gamma(\X\Y)$, and $\mi(\X:\Y)(\Svec)=0$.

    ($\Leftarrow$) We proceed by contradiction. Suppose that given $\Svec$ there exists a pair of subsystems $\X,\Y$ such that $\mi(\X:\Y)(\Svec)=0$, and that there is a pair of vertices $v_\X,v_\Y$, with $v_\X\in\tcut{\X}$ and $v_\Y\in\tcut{\Y}$, which are connected by an edge $e$ of $\gf{T}$. If $\bs{\X}$ is partial or vanishing, we denote by $\X_*$ the element of $\Gamma(\X)$ such that $\tcut{\X_*}$ contains $v_\X$; if $\bs{\X}$ is positive, we simply set $\X_*=\X$. The subsystem $\Y_*$ is defined analogously. Notice that $\tc_{\X_*}$ is a subtree of $\gf{T}$, since either $\bs{\X_*}$ is positive (and we can use \Cref{lem:positive-bs-connected}), or $|\X_*|=1$, and $\tc_{\X_*}$ is a single vertex; the same result applies to $\tc_{\Y_*}$.
    
    We now partition the vertex set $V$ of $\gf{T}$ into two parts according to the two connected components of the subgraph obtained from $\gf{T}$ by deleting $e$. Specifically, $V_{\X_*}$ is the set of vertices of the component that contains the vertices in $\tcut{\X_*}$, and $V_{\Y_*}$ is defined analogously (i.e., it is the complement of $V_{\X_*}$ in $V$). Introducing subsystems $\X'$ and $\Y'$ defined by
    \begin{equation}
        \X'=(V_{\X_*}\setminus \tcut{\X_*})\cap\partial V \qquad \Y'=(V_{\Y_*}\setminus \tcut{\Y_*})\cap\partial V,
    \end{equation}
    we obtain a partition $\{\X_*,\X',\Y_*,\Y'\}$ of $\nsp$.\footnote{\,Here we assume that neither $\X'$ nor $\Y'$ is empty, otherwise the argument is simpler. We leave the details as an exercise for the reader.} By construction, $\X'$ is separated from $\Y'$ by $\X_*\Y_*$, and $\X_*$ is separated from $\Y'$ by $\Y_*$.\footnote{\,Several additional subsystems are separated by other subsystems, but considering these separations (and two more below) will suffice.} \Cref{lem:separation} then implies 
    \begin{equation}
    \label{eq:VMIs-from-separation}
        \mi(\X':\Y')(\Svec)=\mi(\X_*:\Y')(\Svec)=0,
    \end{equation}
    which we write as
    \begin{align}
    \label{eq:VMIs-from-separation}
        & \ent_{\X'}\, +\, \ent_{\Y'}\, =\, \ent_{\X'\Y'}\, =\, \ent_{\X_*\Y_*} \nonumber\\
        & \ent_{\X_*}\, + \,\ent_{\Y'}\, =\, \ent_{\X_*\Y'}\, =\, \ent_{\X'\Y*}\;  
    \end{align}
    where we used $\ent_\Z=\ent_{\Z^\complement}$. Again by construction, $\X'$ is separated from $\Y_*$ by $\X_*$, and \Cref{lem:separation} implies that $\mi(\X':\Y_*)(\Svec)=0$. Furthermore, from the assumption $\mi(\X:\Y)(\Svec)=0$, the inclusions $\X_*\subseteq\X$ and $\Y_*\subseteq\Y$, and the fact that the PMI of $\Svec$ is a KC-PMI, it follows that $\mi(\X_*:\Y_*)(\Svec)=0$. Using these relations we then rewrite \eqref{eq:VMIs-from-separation} as
    \begin{align}
        & \ent_{\X'}\, +\, \ent_{\Y'}\, =\, \ent_{\X_*}\, +\, \ent_{\Y_*} \nonumber\\
        & \ent_{\X_*}\, + \,\ent_{\Y'}\, =\, \ent_{\X'}\, +\, \ent_{\Y_*} \; .
    \end{align}
    Subtracting these relations we get $\ent_{\X'} \,=\, \ent_{\X_*}$, and from $\mi(\X':\Y_*\Y')(\Svec)=0$, which follows from \Cref{lem:separation} and the separation of $\X'$ from $\Y_*\Y'$ by $\X_*$, we get
    \begin{equation}
        \ent_{\Y_*\Y'}\,  =\, \ent_{\X'\Y_*\Y'}\, - \, \ent_{\X'}\, \nonumber\\
         =\, \ent_{\X_*}\, -\, \ent_{\X'}\, =\, 0. 
    \end{equation}
    This is in contradiction with the assumption that all components of $\Svec$ (except only for $\ent_{\nsp}$ and $\ent_\varnothing$) are non-vanishing, and completes the proof.
\end{proof}

We can now use \Cref{lem:disconneted-cuts} to show that \eqref{eq:cost-non-pos-bsets} does indeed hold.

\begin{lemma}
\label{lem:cost-decompos}
    For any subsystem $\X$ such that $\bs{\X}$ is either partial or vanishing, the following decomposition holds
    \begin{equation}
    \label{eq:subgraph-decomposition}
        \tc_\X = \bigoplus_{\Y\in\Gamma(\X)} \tc_\Y,
    \end{equation}
    and 
    \begin{equation}
    \label{eq:cost-decompos-eq}
        \norm{\tcut{\X}}=\sum_{\Y\in\Gamma(\X)} \norm{\tcut{\Y}}.
    \end{equation}
\end{lemma}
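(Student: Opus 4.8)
The plan is to reduce the graph decomposition \eqref{eq:subgraph-decomposition} to a statement about the absence of certain edges, and then to invoke \Cref{lem:disconneted-cuts}. First I would record the easy part: by the definition of the cut for a non-positive $\bset$, \eqref{eq:cut-partial-bs} gives $\tcut{\X}=\bigcup_{\Y\in\Gamma(\X)}\tcut{\Y}$, and since the elements of $\Gamma(\X)$ are pairwise disjoint, (i) of \Cref{lem:fund-cut-properties} guarantees that the cuts $\tcut{\Y}$ are themselves pairwise disjoint. Hence the vertex set of $\tc_\X$ is the disjoint union of the vertex sets of the subgraphs $\tc_\Y$, and the only thing left to prove for \eqref{eq:subgraph-decomposition} is that no edge of $\gf{T}$ joins a vertex of $\tcut{\Y_i}$ to a vertex of $\tcut{\Y_j}$ for distinct $\Y_i,\Y_j\in\Gamma(\X)$.

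The crucial step is to show $\mi(\Y_i:\Y_j)(\Svec)=0$ for every such pair. Since $\bs{\X}\notin\pos$, the bipartition $\{\Y_i,\X\setminus\Y_i\}$ of $\X$ is a coarsening of $\Gamma(\X)$ (it merges all parts other than $\Y_i$), so by the characterization of the vanishing MI instances in $\bs{\X}$ recalled below \eqref{eq:gamma-partition}, equivalently by \eqref{eq:entvec-cost-non-pos-bsets}, we have $\mi(\Y_i:\X\setminus\Y_i)(\Svec)=0$. Because $\Y_j\subseteq\X\setminus\Y_i$, Klein's condition, i.e.\ the fact that $\pmi$ is a down-set in the MI-poset, then yields $\mi(\Y_i:\Y_j)(\Svec)=0$.

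With this in hand I would apply \Cref{lem:disconneted-cuts} to the disjoint pair $\Y_i,\Y_j$: from $\mi(\Y_i:\Y_j)(\Svec)=0$ it gives $\tc_{\Y_i\Y_j}=\tc_{\Y_i}\oplus\tc_{\Y_j}$, so in particular there is no edge of $\gf{T}$ between $\tcut{\Y_i}$ and $\tcut{\Y_j}$. Ranging over all pairs establishes \eqref{eq:subgraph-decomposition}. The cost identity \eqref{eq:cost-decompos-eq} then follows: in the absence of cross edges, the cut edges of each $\tcut{\Y}$ are exactly the edges from $\tcut{\Y}$ to $V(\gf{T})\setminus\tcut{\X}$; these edge sets are disjoint for distinct $\Y\in\Gamma(\X)$ (the inside endpoint determines $\Y$), and together they exhaust $\ce(\tcut{\X})$, so summing their weights gives $\norm{\tcut{\X}}=\sum_{\Y\in\Gamma(\X)}\norm{\tcut{\Y}}$.

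Since the genuinely hard work is already carried by \Cref{lem:disconneted-cuts}, the only non-mechanical point here is the vanishing of the pairwise mutual informations, and I expect that to be the main thing to get right: it relies on reading off the vanishing MIs from $\Gamma(\X)$ for the coarsest admissible bipartitions and then propagating downward through the MI-poset via Klein's condition, rather than on any further structural input about $\gf{T}$.
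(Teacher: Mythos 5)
Your proof is correct and takes essentially the same route as the paper: both reduce the lemma to the ($\Leftarrow$) direction of \Cref{lem:disconneted-cuts}, fed by the vanishing mutual informations that $\Gamma(\X)$ makes manifest. The only difference is bookkeeping --- the paper peels off components iteratively, applying \Cref{lem:disconneted-cuts} to the pair $(\Y_1,\X\setminus\Y_1)$ and recursing on $\X\setminus\Y_1$ (which requires noting that $\bs{\X\setminus\Y_1}$ is again non-positive), whereas you apply it pairwise to $(\Y_i,\Y_j)$ after obtaining $\mi(\Y_i:\Y_j)(\Svec)=0$ via Klein's condition and then assemble the direct sum and the cut-edge count explicitly; both organizations are sound.
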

\begin{proof}
    Consider a subsystem $\X$ such that $\bs{\X}$ is either partial or vanishing, and let $\Y_1$ be an element of $\Gamma(\X)$. By basic properties of $\Gamma$-partitions we have $\mi(\Y_1:\X\setminus\Y_1)(\Svec)=0$, and by \Cref{lem:disconneted-cuts} it follows that $\tc_{\X}=\tc_{\Y_1}\oplus\tc_{\X\setminus\Y_1}$. If $\X\setminus\Y_1$ is a component of $\Gamma(\X)$, the proof is complete. Otherwise, let $\Y_2$ be another component of $\Gamma(\X)$. Since $\Y_2\subset (\X\setminus\Y_1)$, the $\bset$ of $\X\setminus\Y_1$ is not positive, and we have $\mi(\Y_2:\X\setminus(\Y_1\cup\Y_2))(\Svec)=0$. using again \Cref{lem:disconneted-cuts} we obtain, 
    \begin{equation}
        \tc_{\X\setminus\Y_1}=\tc_{\Y_2}\oplus\tc_{\X\setminus(\Y_1\cup\Y_2)} \quad \implies \quad \tc_X = \tc_{\Y_1} \oplus \tc_{\Y_2}\oplus\tc_{\X\setminus(\Y_1\cup\Y_2)}.
    \end{equation}
    Proceeding in this fashion for all elements of $\Gamma(\X)$ we obtain \eqref{eq:subgraph-decomposition}, from which \eqref{eq:cost-decompos-eq} follows immediately.
\end{proof}

Having proved \eqref{eq:cost-non-pos-bsets}, let us now briefly summarize what remains to be shown in this subsection. Since both \eqref{eq:cost-non-pos-bsets} and \eqref{eq:entvec-cost-non-pos-bsets} hold for all subsystems whose $\bset$ is not positive, and since by definition of $\Gamma$-partitions all elements of $\Gamma(\X)$ have positive $\bset$ or are singletons, to prove that ${\sf C}_\X=\ent_\X$ for all subsystems $\X$ it is sufficient to show that this relation holds whenever $\bs{\X}$ is positive or $|\X|=1$. The next result proves that this is indeed the case.

\begin{thm}
\label{thm:C-equals-S}
    For any irreducible entropy vector $\Svec$, and any holographic simple tree graph model $\gf{T}$ resulting from Algorithm~\ref{alg:reconstruction},
    \begin{equation}
    \label{eq:C-equals-S}
        {\sf C}_\X = \ent_\X, \quad \forall\,\X\subseteq\nsp.
    \end{equation}
\end{thm}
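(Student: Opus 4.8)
As noted immediately above, combining the cost decomposition \eqref{eq:cost-non-pos-bsets} established in \Cref{lem:cost-decompos} with the entropy relation \eqref{eq:entvec-cost-non-pos-bsets}, and using that every element of a $\Gamma$-partition is either a singleton or has positive $\bset$, it suffices to prove ${\sf C}_\X=\ent_\X$ in the two base cases $|\X|=1$ and $\bs{\X}\in\pos$; the general case then follows term by term over $\Gamma(\X)$ with no further induction.

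The singleton case is immediate from the weight rule of Algorithm~\ref{alg:reconstruction}. For a party $\ell$ the cut is $\tcut{\ell}=\{v_\ell\}$ by \eqref{eq:cut-single-party}, and $v_\ell$ is by construction a leaf joined to $\gf{T}$ by a single edge $e$. Deleting $e$ partitions $\nsp$ into $\{\ell\}$ and $\nsp\setminus\{\ell\}$, so the rule assigns $w(e)=\ent_\ell$ if $\ell\neq 0$ and $w(e)=\ent_{[\N]}$ if $\ell=0$; in both cases the extension conventions \eqref{eq:S-extension1}--\eqref{eq:S-extension2} yield ${\sf C}_\ell=w(e)=\ent_\ell$.

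The real work is the case $\bs{\X}\in\pos$. Here \Cref{lem:positive-bs-connected} makes $\tc_\X$ a connected subtree, so its cut edges $e_1,\dots,e_k$ are precisely the edges leaving $\tc_\X$, and removing $\tc_\X$ from $\gf{T}$ leaves pendant subtrees $T_1,\dots,T_k$ with $T_i$ attached via $e_i$. Since \eqref{eq:cut-pos-bs} shows that the only boundary vertices of $\tcut{\X}$ are those of $\X$ itself, the boundary parties of the $T_i$ form a partition $\{\X_1,\dots,\X_k\}$ of $\X^\complement$. Deleting the single edge $e_i$ splits the boundary into $\X_i$ and $\X_i^\complement$, so the weight rule together with \eqref{eq:S-extension1} gives $w(e_i)=\ent_{\X_i}$ (with the convention $\ent_\varnothing=0$ covering any $T_i$ carrying no party), and hence $\norm{\tcut{\X}}=\sum_{i}\ent_{\X_i}$. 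What remains is the entropy factorization $\sum_i\ent_{\X_i}=\ent_{\X^\complement}=\ent_\X$.

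I would derive this factorization from \Cref{lem:separation}, and I expect its hypothesis to be the main obstacle. The decisive point is the localization $\tcut{\X_i}\subseteq V(T_i)$: since $\X_i\cap\X=\varnothing$, part (i) of \Cref{lem:fund-cut-properties} places $\tcut{\X_i}$ outside $\tc_\X$, hence inside $\bigsqcup_j V(T_j)$, and applying the connectedness of positive-$\bset$ cuts (\Cref{lem:positive-bs-connected}) to the components of $\Gamma(\X_i)$ confines it to the single pendant subtree $T_i$ that contains its boundary leaves. Granting this, every path in $\gf{T}$ from $\tcut{\X_i}$ to $\tcut{\X^\complement\setminus\X_i}$ must cross $\tc_\X$, so $\X_i$ and $\X^\complement\setminus\X_i$ are separated by $\X$ and \Cref{lem:separation} forces $\mi(\X_i:\X^\complement\setminus\X_i)(\Svec)=0$ for each $i$. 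Because the PMI of $\Svec$ is a KC-PMI, hence a down-set, these constraints chain together to give $\ent_{\X^\complement}=\sum_i\ent_{\X_i}$, completing the argument. The weight bookkeeping and the chaining are then routine; the genuine content lies in the localization step that underwrites the separation.
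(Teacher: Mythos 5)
Your proposal follows the same skeleton as the paper's proof: the reduction to the two base cases $|\X|=1$ and $\bs{\X}\in\pos$ via \Cref{lem:cost-decompos} and \eqref{eq:entvec-cost-non-pos-bsets}, the leaf-weight argument for singletons, and the expansion of $\norm{\tcut{\X}}$ over the pendant subtrees hanging off the connected subtree $\tc_\X$ (the paper calls the resulting partition of $\X^\complement$ by boundary parties $\Upsilon(\X^\complement)$). Where you genuinely diverge is in how the factorization $\sum_i \ent_{\X_i}=\ent_{\X^\complement}$ is established. The paper never invokes \Cref{lem:separation} at this point: it shows, using \Cref{lem:fund-cut-properties}(i) and \Cref{lem:positive-bs-connected}, that every element of $\Gamma(\X^\complement)$ is contained in a single block of $\Upsilon(\X^\complement)$, and then obtains the factorization purely arithmetically, by applying \eqref{eq:entvec-cost-non-pos-bsets} once to each non-positive block and once to $\X^\complement$ itself. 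You instead derive fresh vanishing mutual informations $\mi(\X_i:\X^\complement\setminus\X_i)(\Svec)=0$ from \Cref{lem:separation} and chain them using the down-set (Klein) property of the PMI. Both routes rest on the same structural lemmas; yours has the advantage of treating the paper's three cases (i)--(iii) uniformly (a positive $\bs{\X^\complement}$ simply means a single nonempty block, and the paper's case (ii) becomes trivial), while the paper's route is more elementary in that it only re-uses the pre-established linear relation rather than the separation machinery.

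The one step you must tighten is exactly the one you flagged with ``granting this.'' Localizing $\tcut{\X_j}\subseteq V(T_j)$ for every $j$ does \emph{not} by itself localize $\tcut{\X^\complement\setminus\X_i}$, because the cut of a union of blocks is not the union of the blocks' cuts: if $\bs{\X^\complement\setminus\X_i}$ is positive, $\tcut{\X^\complement\setminus\X_i}$ contains clique vertices belonging to no $\tcut{\X_j}$, and if it is partial, it is built from its own $\Gamma$-elements, which a priori could straddle several blocks $\X_j$. Without controlling this, the separation hypothesis of \Cref{lem:separation} is not verified. The fix is to run your localization argument once more, applied to $\X^\complement\setminus\X_i$ as a whole: each connected piece of $\tc_{\X^\complement\setminus\X_i}$ (the whole of it if its $\bset$ is positive, by \Cref{lem:positive-bs-connected}, otherwise the pieces given by its $\Gamma$-elements, and note the same caveat applies to your treatment of $\X_i$ itself when $\bs{\X_i}\in\pos$) is disjoint from $\tcut{\X}$ by \Cref{lem:fund-cut-properties}(i), hence lies inside a single pendant subtree, and contains at least one boundary leaf, which by construction lies outside $T_i$; therefore $\tcut{\X^\complement\setminus\X_i}\subseteq\bigcup_{j\neq i}V(T_j)$, every path from $\tcut{\X_i}$ to it crosses $\tc_\X$, and \Cref{lem:separation} applies. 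With this addition your argument is complete and correct.
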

\begin{proof}
    The case $\X=\varnothing$ is trivial. For $\X=\nsp$, the assumption that $\Svec$ is irreducible implies $\bs{\nsp}\in\pos$, and $h_{\nsp}$ belongs to all max-cliques of $\lhp$. Therefore $\tcut{\nsp}=V(\gf{T})$, from which we obtain
    \begin{equation}
        {\sf C}_{\nsp}= \norm{V(\gf{T})} = 0 = \ent_{\nsp},
    \end{equation}
    where the last equality follows purity of the full system (cf., \eqref{eq:S-extension2}). 
    
    More generally, recall that we only need to show that \eqref{eq:C-equals-S} holds when $\bs{\X}$ is positive or $|\X|=1$, since in combination with \eqref{eq:cost-non-pos-bsets} and \Cref{lem:cost-decompos}, this implies that \eqref{eq:C-equals-S} holds for every other subsystem. Accordingly, we organize the proof into three cases,\footnote{\,Here we assume $\N\geq 2$, but the case $\N=1$ is trivial.} where (i) and (iii) are defined without loss of generality, since we can swap the definition of $\X$ and $\X^\complement$ as needed.

     \paragraph{(i) $|\X|=1$, $\bs{\X^\complement}\in\pos$:} Let $\X=\ell$ for some party $\ell$. By Algorithm~\ref{alg:reconstruction}, the vertex $v_\ell$ is a leaf of $\gf{T}$ and the weight of the only edge incident with $v_\ell$ is $\ent_\ell$. Therefore, since $\tcut{\ell}=\{v_\ell\}$, it follows immediately that ${\sf C}_\ell=\ent_\ell$.

     For the subsystem $\X^\complement$, since $\bs{\X^\complement}$ is positive and $|\X^\complement|=\N$, it follows that $\X^\complement\cap\Y\neq\varnothing$ for any $\Y$ whose $\bset$ is positive (since $|\Y|\geq 2$). The hyperedge $h_{\X^\complement}$ of $\hp$ then belongs to all max-cliques of $\lhp$, and $\tcut{\X^\complement}$ contains all vertices of $\widetilde{\gf{T}}$. Therefore, $\tcut{\X^\complement}$ contains all vertices $v_\ell'$ for $\ell'\neq\ell$, and ${\sf C}_{\X^\complement}$ is given by the weight of the edge incident with the leaf $v_\ell$, from which ${\sf C}_{\X^\complement}={\sf C}_\ell=\ent_\ell=\ent_{\X^\complement}$.

    \paragraph{(ii) $\min(|\X|,|\X^\complement|)\geq 2$, $\bs{\X}\in\pos$, $\bs{\X^\complement}\in\pos$:} by \Cref{lem:positive-bs-connected}, both $\tc_{\X}$ and $\tc_{\X^\complement}$ are connected. Since $\gf{T}$ is a tree, the set of cut edges of each cut contains a single edge, and in both cases the weight is assigned by Algorithm~\ref{alg:reconstruction} to be $\ent_\X=\ent_{\X^\complement}$.

    \paragraph{(iii) $\min(|\X|,|\X^\complement|)\geq 2$, $\bs{\X}\in\pos$, $\bs{\X^\complement}\notin\pos$:} in this case, we first rewrite ${\sf C}_\X$ as follows
    \begin{equation}
    \label{eq:third-case}
        {\sf C}_\X = \!\sum_{e \in \ce({\sf C}_\X )} \!\!w_e \, = \!\sum_{\Y\in\Upsilon(\X^\complement)} \!\ent_\Y \, = \!\sum_{\substack{\Y\in\Upsilon(\X^\complement) \\ \bs{\Y}\in\pos}} \!\ent_\Y \, + \!\sum_{\substack{\Y\in\Upsilon(\X^\complement)\\ |\Y|=1}} \ent_\Y \, +\, \!\!\sum_{\substack{\Y\in\Upsilon(\X^\complement)\\ |\Y|\geq 2 \\ \bs{\Y}\notin\pos}} \!\ent_\Y\; . 
    \end{equation}
    In the equation above, the second equality follows from the fact that the edges in $\ce({\sf C}_\X)$ with non vanishing weights correspond to pairs $\langle\Y,\Y^\complement\rangle_e$ such that if, for each $e\in\ce({\sf C}_\X)$, we choose $\Y$ such that $\Y\cap\X=\varnothing$, the resulting collection of subsystems is a partition $\Upsilon(\X^\complement)$ of $\X^\complement$. In the third equality, we have simply split the sum into three terms according to the cardinality of $\Y$ and on whether $\bs{\Y}$ is positive or not. Henceforth, to simplify the notation, we will keep implicit the condition $\Y\in\Upsilon(\X^\complement)$ in the remainder of the proof.

    Consider now the partition $\Gamma(\X^\complement)$ and an arbitrary element $\Z$ such that $|\Z|\geq 2$. By the definition of $\Gamma$-partitions, $\bs{\Z}\in\pos$, and by \Cref{lem:positive-bs-connected} $\tc_{\Z}$ is connected. Since $\Z\cap\Y=\varnothing$, \Cref{lem:fund-cut-properties} implies that $\tcut{\X}\cap\tcut{\Z}=\varnothing$, from which it follows that $\Z\subseteq\Y$ for some element $\Y$ of $\Upsilon(\X^\complement)$. If instead $|\Z|=1$, the inclusion $\Z\subseteq\Y$ for some $\Y$ in $\Upsilon(\X^\complement)$ is trivial (since $\Upsilon(\X^\complement)$ is a partition of $\X^\complement$).

    We can then rewrite \eqref{eq:third-case} as 
    \begin{equation}
    \label{eq:third-case-2}
        {\sf C}_\X = \!\sum_{\bs{\Y}\in\pos} \!\ent_\Y \, + \!\sum_{|\Y|=1} \ent_\Y \, + \!\sum_{\substack{|\Y|\geq 2 \\ \bs{\Y}\notin\pos}} \;\sum_{\Z\in\Gamma(\Y)} \!\ent_\Z \;,
    \end{equation}
    where we have used \eqref{eq:entvec-cost-non-pos-bsets}, and from which it follows (again by \eqref{eq:entvec-cost-non-pos-bsets}) that
    \begin{equation}
        {\sf C}_\X = \sum_{\Z\in\Gamma(\X^\complement)} \!\ent_\Z\, = \, \ent_{\X^\complement} \, =\, \ent_\X.
    \end{equation}
\end{proof}

\subsection{The cuts $\tcut{\X}$ are min-cuts on $\gf{T}$}
\label{subsec:min-cut}

In the previous subsection we have shown that for any irreducible entropy vector $\Svec$, and any choice of $\widetilde{\gf{T}}$, the holographic simple tree graph model $\gf{T}$ resulting from Algorithm~\ref{alg:reconstruction} has the property that for each component $\ent_\X$ of $\Svec$, the cost of the cut $\tcut{\X}$ is equal to $\ent_\X$. However, as we anticipated above, this is not yet sufficient to conclude that $\Svec$ is the entropy vector obtained from $\gf{T}$ following the min-cut prescription reviewed in \S\ref{subsec:graphs}, because a priori the cut $\tcut{\X}$ for $\X$ does not have to be a min-cut for $\X$. In fact, as exemplified in \Cref{fig:SA-violation-example}, if we ignore the minimality of the cost, this matching even extends to entropy vectors that violate SA. On the one hand, this matching ultimately relies on the fact that the topology of the tree obtained from Algorithm~\ref{alg:reconstruction} does not depend on the specific choice of the entropy vector, but rather on its PMI. The results of the previous subsection thereby can be interpreted as showing that as long as an entropy vector belongs to a subspace of entropy space corresponding to a chordal KC-PMI, this matching is attained. On the other hand, the fact that for entropy vectors on this subspace that violate SA the prescription does not work ultimately depends on the intimate relationship between SA and min-cut inequalities on simple tree graph models. Indeed, the next result shows that for irreducible entropy vectors, which we have defined such that they obey SA, the costs $\tcut{\X}$ are always min-cuts precisely because of SA. This result completes the proof of \Cref{thm:alg-suff}, and therefore \Cref{thm:main-reduced}.

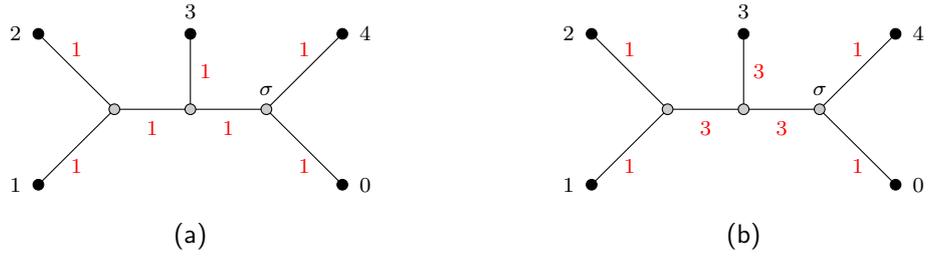
\begin{figure}[tbp]
    \centering
    \begin{subfigure}{0.45\textwidth}
    \centering
    \begin{tikzpicture}[scale=1]
    \draw (0,0) -- (-1,-1);
    \draw (0,0) -- (-1,1);
    \draw (0,0) -- (2,0);
    \draw (2,0) -- (3,1);
    \draw (2,0) -- (3,-1);
    \draw (1,0) -- (1,1);
    
    \filldraw (-1,-1) circle (2pt);
    \filldraw (-1,1) circle (2pt);
    \filldraw (3,-1) circle (2pt);
    \filldraw (3,1) circle (2pt);
    \filldraw (1,1) circle (2pt);
    
    \filldraw[fill=bulkcol] (0,0) circle (2pt);
    \filldraw[fill=bulkcol] (2,0) circle (2pt);
    \filldraw[fill=bulkcol] (1,0) circle (2pt);
    
    \node[] () at (-1.3,1) {{\scriptsize $2$}};
    \node[] () at (-1.3,-1) {{\scriptsize $1$}};
    \node[] () at (1,1.3) {{\scriptsize $3$}};
    \node[] () at (3.3,1) {{\scriptsize $4$}};
    \node[] () at (3.3,-1) {{\scriptsize $0$}};
    
    \node[red] () at (-0.5,0.8) {{\scriptsize $1$}};
    \node[red] () at (-0.5,-0.75) {{\scriptsize $1$}};
    \node[red] () at (0.5,-0.25) {{\scriptsize $1$}};
    \node[red] () at (1.5,-0.25) {{\scriptsize $1$}};
    \node[red] () at (2.5,0.8) {{\scriptsize $1$}};
    \node[red] () at (2.5,-0.75) {{\scriptsize $1$}};
    \node[red] () at (1.2,0.5) {{\scriptsize $1$}};

    \node[] () at (2,0.25) {{\scriptsize $\sigma$}};
    \end{tikzpicture}
    \subcaption[]{}
    \end{subfigure}
    \hspace{0.2cm}
    \begin{subfigure}{0.45\textwidth}
    \centering
    \begin{tikzpicture}[scale=1]
    \draw (0,0) -- (-1,-1);
    \draw (0,0) -- (-1,1);
    \draw (0,0) -- (2,0);
    \draw (2,0) -- (3,1);
    \draw (2,0) -- (3,-1);
    \draw (1,0) -- (1,1);
    
    \filldraw (-1,-1) circle (2pt);
    \filldraw (-1,1) circle (2pt);
    \filldraw (3,-1) circle (2pt);
    \filldraw (3,1) circle (2pt);
    \filldraw (1,1) circle (2pt);
    
    \filldraw[fill=bulkcol] (0,0) circle (2pt);
    \filldraw[fill=bulkcol] (2,0) circle (2pt);
    \filldraw[fill=bulkcol] (1,0) circle (2pt);
    
    \node[] () at (-1.3,1) {{\scriptsize $2$}};
    \node[] () at (-1.3,-1) {{\scriptsize $1$}};
    \node[] () at (1,1.3) {{\scriptsize $3$}};
    \node[] () at (3.3,1) {{\scriptsize $4$}};
    \node[] () at (3.3,-1) {{\scriptsize $0$}};
    
    \node[red] () at (-0.5,0.8) {{\scriptsize $1$}};
    \node[red] () at (-0.5,-0.75) {{\scriptsize $1$}};
    \node[red] () at (0.5,-0.25) {{\scriptsize $3$}};
    \node[red] () at (1.5,-0.25) {{\scriptsize $3$}};
    \node[red] () at (2.5,0.8) {{\scriptsize $1$}};
    \node[red] () at (2.5,-0.75) {{\scriptsize $1$}};
    \node[red] () at (1.2,0.5) {{\scriptsize $3$}};

    \node[] () at (2,0.25) {{\scriptsize $\sigma$}};
    \end{tikzpicture}
    \subcaption[]{}
    \end{subfigure}
    \caption{An example of an entropy vector which violates SA but for which the cost of the cuts $\tcut{\X}$ on the tree $\gf{T}$ resulting from Algorithm~\ref{alg:reconstruction} reproduces the components of the entropy vector. Accordingly, the cuts $\tcut{\X}$ are not min-cuts. For the KC-PMI $\pmi=\;\downarrow\!\{\mi(12:4),\mi(12:0),\mi(04:1),\mi(04:2)\}$  (a) shows the realization of the entropy vector $\Svec=(1, 1, 1, 1, 1, 2, 2, 2, 2, 2, 1, 2, 2, 2, 1)$ obtained from Algorithm~\ref{alg:reconstruction}. (b) is a graph model with the same topology as (a), but with weights assigned from the components of the entropy vector $\vec{\ent}'=(1, 1, 3, 1, 3, 4, 2, 4, 2, 4, 3, 4, 2, 2, 1)$, belonging to the same subspace of entropy space spanned by the face of the SAC corresponding to $\pmi$. Notice that $\vec{\ent}'$ violates SA, since $2=\ent'_1+\ent'_2<\ent'_{12}=3$. Nevertheless, one can easily verify that the cost of the cuts $\tcut{\X}$ correctly reproduce all components of $\vec{\ent}'$. The resolution of the apparent paradox is that while the cut $\tcut{40}=\{0,4,\sigma\}$ is the min-cut for the $40$ subsystem in (a), it is not the min-cut in (b). Accordingly, the min-cut prescription assigns to (b) an entropy vector which is not $\vec{\ent}'$.}
    \label{fig:SA-violation-example}
\end{figure}

\begin{thm}
    For any irreducible entropy vector $\Svec$, any simple tree graph model $\gf{T}$ resulting from Algorithm~\ref{alg:reconstruction}, and any subsystem $\X$, the cut $\tcut{\X}$ is a min-cut for $\X$.
\end{thm}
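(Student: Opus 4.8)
The plan is to prove the statement by showing that \emph{every} $\X$-cut $C$ of $\gf{T}$ satisfies $\norm{C}\geq\ent_\X$. Since \Cref{thm:C-equals-S} already gives $\norm{\tcut{\X}}=\ent_\X$, this at once identifies $\tcut{\X}$ as a min-cut. The entire argument will reduce this single inequality to repeated applications of SA, thereby making precise the heuristic—illustrated by \Cref{fig:SA-violation-example}—that it is exactly subadditivity that enforces minimality on a simple tree.

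First I would fix a convenient combinatorial description of an arbitrary cut. Rooting $\gf{T}$ at the purifier vertex $v_0$ and deleting the cut edges $\ce(C)$ produces connected components $G_1,\dots,G_m$, with $G_1\ni v_0$. Each $G_j$ is monochromatic (an internal bichromatic edge would be a cut edge), and adjacent components are oppositely colored, so the induced ``component tree'' is properly $2$-colored and membership in $C$ is determined by depth-parity. Writing $\Sigma_r$ for the set of parties in the subtree hanging below the parent edge of $G_r$, the weight assigned in step~4 of Algorithm~\ref{alg:reconstruction} to that edge is exactly $\ent_{\Sigma_r}$, since the side not containing the purifier is precisely $\Sigma_r$. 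Because every inter-component edge is a cut edge, this yields the clean cost formula $\norm{C}=\sum_{j\neq 1}\ent_{\Sigma_j}$.

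The technical heart is then an induction over the component tree. For a component $G_r$, let $\X_r$ (resp. $\Y_r$) be the $C$-colored (resp. $C^\complement$-colored) parties in the subtree rooted at $G_r$, so that $\Sigma_r=\X_r\cup\Y_r$, and set $F(r)=\sum_{G_j\preceq G_r}\ent_{\Sigma_j}$. I would prove $F(r)\geq\ent_{\X_r}$ when $G_r$ is $C$-colored and $F(r)\geq\ent_{\Y_r}$ when it is $C^\complement$-colored. The inductive step uses the decomposition $F(r)=\ent_{\Sigma_r}+\sum_i F(c_i)$ over the children $c_i$ (which carry the opposite color), the inductive bounds on the $F(c_i)$, and the telescoping triangle inequality $\ent_A\leq\ent_{AB}+\ent_B$ to peel the children's subsystems off $\ent_{\Sigma_r}$. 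That triangle inequality is just an instance of SA in the extended space (equivalently Araki--Lieb), and hence holds for $\Svec$. Applying the lemma at the children $c$ of the root and then one final instance of subadditivity, $\sum_c\ent_{\X_c}\geq\ent_{\bigcup_c\X_c}=\ent_\X$, yields $\norm{C}\geq\ent_\X$; the case $0\in\X$ is handled identically after swapping colors and using $\ent_\Z=\ent_{\Z^\complement}$.

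I expect the main obstacle to be the bookkeeping rather than any deep inequality: one must set up the component decomposition so that the algorithm's edge weights align exactly with the subsystems $\Sigma_j$, and choose the inductive invariant so that the \emph{only} primitive available—subadditivity in its extended/Araki--Lieb form—is strong enough at every step. The delicate point is the parity/coloring at the root, which is why I would treat both root colors uniformly through the identity $\ent_\Z=\ent_{\Z^\complement}$; once the invariant is correctly phrased, each inductive step is a one-line use of SA.
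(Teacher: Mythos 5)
Your proposal is correct, and while it shares the paper's setup, its core mechanism is genuinely different. Both arguments begin the same way: fix an arbitrary $\X$-cut $C$, use \Cref{thm:C-equals-S} together with step 4 of Algorithm~\ref{alg:reconstruction} to reduce the claim to the inequality $\norm{C}\geq\ent_\X$, and pass to the tree of monochromatic components obtained by deleting the cut edges (your ``component tree'' is exactly the paper's contracted, labeled tree $\gf{G}$), whose cut-edge weights are the entropies $\ent_{\Sigma_r}$ of the subsystems hanging away from the purifier. The difference lies in how this entropy inequality is certified by SA. The paper runs an iterative sequence of graph rewritings---converting ``rosettes'' into ``buds'' and deleting buds---adding one SA or Araki--Lieb instance per step, and must then separately argue that all intermediate terms cancel and that a parity count of bud removals identifies the surviving left-hand side as $\ent_\X$; this forces extra machinery (a preliminary absorption of components containing no boundary vertices, special rules for star-like rosettes and for the root). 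Your bottom-up induction with the invariant $F(r)\geq\ent_{\X_r}$ for $C$-colored $G_r$, resp.\ $F(r)\geq\ent_{\Y_r}$ for $C^\complement$-colored $G_r$, packages the same SA/Araki--Lieb content into a single structural induction: for a $C$-colored component, $F(r)\geq\ent_{\Sigma_r}+\sum_i\ent_{\Y_{c_i}}\geq\ent_{\X_r\Y_r}+\ent_{\Y_r}\geq\ent_{\X_r}$, where the middle step is ordinary SA over the disjoint children and the last is the Araki--Lieb instance $\mi\bigl(\Y_r:(\X_r\Y_r)^\complement\bigr)\geq 0$ in extended entropy space; the root step $\sum_c\ent_{\X_c}\geq\ent_\X$ and the purification swap for $0\in\X$ then close the argument. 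This eliminates the paper's cancellation and parity bookkeeping entirely, and it also renders the absorption step unnecessary, since components without boundary vertices just contribute $\ent_\varnothing=0$ and pass through the induction harmlessly. What the paper's formulation buys in exchange is a fully explicit, step-by-step construction of the positive linear combination of SA instances witnessing \eqref{eq:ent-ineq}; your invariant proof establishes the same conclusion more compactly, with the linear combination left implicit in the unrolled induction.
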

\begin{proof}
    For $\X=\varnothing$ the statement is trivially true by the non-negativity of the weights. For any $\X\neq\varnothing$, let $C_\X$ be an arbitrary $\X$-cut distinct from $\tcut{\X}$. We need to show that
    \begin{equation}
    \label{eq:min-cut-ineq}
        \norm{\tcut{\X}}\, \leq \norm{C_\X}\qquad \forall\, C_\X.
    \end{equation}
    
    For a choice of subsystem $\X$, consider then an arbitrary cut $C_\X$. The subgraph $\gf{T}_\X$ of $\gf{T}$ induced by $C_\X$ is not necessarily connected, and we decompose $C_\X$ according to the connected components of $\gf{T}_\X$, i.e., 
    \begin{equation}
    \label{eq:cut-decomposition}
        C_\X = \bigcup\, \{C_Y\subseteq C_\X|\, \gf{T}_\Y\; \text{is connected} \},
    \end{equation}
    where for each $C_Y$, the subsystem $\Y$ is the set of parties corresponding to the boundary vertices of $\gf{T}$ in $C_Y$. Notice that $\Y$ can be empty, while by construction, the collection of all non-empty $\Y$ in \eqref{eq:cut-decomposition} is a partition of $\X$.

    For a choice of $C_\X$, suppose now that in the decomposition \eqref{eq:cut-decomposition} there is indeed a term where $\Y$ is the empty set. In this case we obtain a new cut $C'_\X$, with $\norm{C'_\X}\, \leq \norm{C_\X}$ by simply ``absorbing'' the vertices in that term into the complement of the cut $(C_\X)^\complement=V\setminus C_\X$ (since the set of cut edges of $C'_\X$ is a subset of the set of cut edges of $C_\X$, and all weights are non-negative). Furthermore, for any choice of $C_\X$, the decomposition \eqref{eq:cut-decomposition} also applies to $(C_\X)^\complement$, and by a similar reasoning, we can reduce the cost of $C_\X$ by absorbing into $C_\X$ the vertices of any term of the decomposition of $(C_\X)^\complement$ which does not include any boundary vertex. Accordingly, in the remainder of this proof, we will only consider cuts $C_\X$ such that for both $C_\X$ and $(C_\X)^\complement$, the decomposition \eqref{eq:cut-decomposition} does not include terms where $\Y=\varnothing$, since if \eqref{eq:min-cut-ineq} holds under this restriction, it holds in general. To simplify the notation, we keep this restriction implicit, and denote by $C_\X$ an arbitrary cut which obeys these conditions.

    Starting from $\gf{T}$, and for a choice of $C_\X$, we now construct a new vertex-labeled tree $\gf{G}$ as follows. First, we contract all edges of $\gf{T}$ which are not cut by $C_\X$. Next, since each vertex of the resulting tree corresponds to a component of the decomposition \eqref{eq:cut-decomposition}, either for $C_\X$ or for its complement, we label a vertex by the corresponding subsystem $\Y$ in \eqref{eq:cut-decomposition}. The restriction on $C_\X$ discussed above then guarantees that each vertex of $\gf{G}$ is labeled by a non-empty set. Furthermore, each edge $e$ of $\gf{G}$ obviously corresponds to an edge of $\gf{T}$, and the associated pair $\langle\Z,\Z^\complement\rangle_e$ remains unchanged. The tree $\gf{G}$ should not be interpreted as a new holographic graph model; it is merely a device which is convenient for the purpose of this proof.

    The edge set of $\gf{G}$ is precisely the set of cut edges for $C_\X$ in $\gf{T}$, i.e., $E(\gf{G})=\ce(C_\X)$,
    and using \Cref{thm:C-equals-S} and Algorithm~\ref{alg:reconstruction}, we can rewrite \eqref{eq:min-cut-ineq} as
    \begin{equation}
        {\sf C}_\X =\, \ent_\X\, \leq \!\sum_{e\in E(\gf{G})} \ent_{\Z(e)} = \!\sum_{e\in E(\gf{G})} \!w(e)\, =\, \norm{C_\X},
    \end{equation}
    where for an edge $e$ of $\gf{G}$, we denote by $\Z(e)$ an arbitrary element of the pair $\langle\Z,\Z^\complement\rangle_e$. Proving \eqref{eq:min-cut-ineq} then amounts to proving that the components of $\Svec$ satisfy the inequality
    \begin{equation}
    \label{eq:ent-ineq}
        \ent_\X\, \leq \!\sum_{e\in E(\gf{G})} \ent_{\Z(e)}.
    \end{equation}
    In the remainder of the proof we will show that this is the case simply because \eqref{eq:ent-ineq} is a positive linear combination of instances of SA.

    To show this, we will use an edge-labeled rooted tree obtained from $\gf{G}$ as follows. We first choose an arbitrary vertex of $\gf{G}$ corresponding to a component of the complement of $C_\X$, which we designate as the \textit{root} of the tree.  Next, for each edge $e$, we choose the label from $\langle\Z,\Z^\complement\rangle_e$ that does not contain the subsystem labeling the root. We then drop the vertex labels introduced above, since they will no longer play any role. In what follows we denote this edge-labeled rooted tree by $\gf{G}^{(0)}$.
    
    It will be convenient to introduce the following terminology for two particular types of vertices of $\gf{G}^{(0)}$ (as well as of any other graph in the sequence defined below). A vertex which is adjacent to at least two leaves, and such that all its neighbors, except at most for one, are leaves, will be called a \textit{rosette}. If all neighbors of a rosette are leaves, the rosette is said to be \textit{star-like}, otherwise it is \textit{generic}. A vertex with degree 2 which is adjacent to exactly one leaf is a \textit{bud}. Note that the root may or may not be a rosette or a bud.
    
    We now construct a sequence of graphs, where at each step $\gf{G}^{(i)}$ we reduce the number of edges of the graph, until there are no more edges and the sequence ends. Correspondingly, while building this sequence, we explicitly construct the positive linear combination of SA instances that reproduces \eqref{eq:ent-ineq}. Specifically, we construct a sequence of inequalities corresponding to the graphs, where we denote by $\mathfrak{I}^{(i)}$ the inequality at step $i$, and set $\mathfrak{I}^{(0)}$ to be the identity $0\geq 0$. The inequality at the following step, $\mathfrak{I}^{(i+1)}$, is obtained from the previous one by adding either a positive combination of instances of SA, or a single instance (as outlined below). The inequality at the end of the sequence will be \eqref{eq:ent-ineq}, which will complete the proof. 

    The two main transformations that we use to construct the sequences of graphs and inequalities are illustrated in \Cref{fig:min-cut-proof}; a few variations will be discussed below. The first transformation converts a generic rosette into a bud. For a given graph $\gf{G}^{(i)}$ with corresponding inequality $\mathfrak{I}^{(i)}$, and for a chosen rosette, we denote by $\Y$ the label of the edge connecting the rosette to its unique non-leaf neighbor, and by $\X_1,\ldots,\X_k$ (for some $k$ which depends on the graph and choice of rosette) the labels of the edges connecting the rosette to the leaves (cf., \Cref{fig:rosette-in}). The rosette is transformed into a bud by replacing the leaves with a single leaf, with label $\W=\bigcup_{i\in [k]}\X_i$ for the incident edge (cf., \Cref{fig:rosette-out}), resulting in a new graph $\gf{G}^{(i+1)}$. The new inequality $\mathfrak{I}^{(i+1)}$ is obtained from $\mathfrak{I}^{(i)}$ by adding
    \begin{equation}
    \label{eq:rosette-ineq}
        \ent_\W\, \leq\, \sum_{i\in [k]} \ent_{\X_i}\, ,
    \end{equation}
    which is the sum of the following SA instances
    \begin{align}
         \ent_{\X_1\X_2} & \, \leq \, \ent_{\X_1} + \ent_{\X_2} \nonumber\\
         \ent_{\X_1\X_2\X_3} & \, \leq \, \ent_{\X_1\X_2} + \ent_{\X_3} \nonumber\\
         & \, \cdots \nonumber\\
         \ent_{\W} & \, \leq \, \ent_{\X_1\X_2\ldots\X_{k-1}} + \ent_{\X_k}\; .
    \end{align}
    In the particular case of a star-like rosette, the entire star graph is replaced by a single vertex, and the associated inequality is the same as \eqref{eq:rosette-ineq}, but the labels $\X_1,\ldots,\X_k$ now correspond to all neighbors.

\begin{figure}[tbp]
    \centering
    \begin{subfigure}{0.45\textwidth}
    \centering
    \begin{tikzpicture}[scale=1]
    \draw (0.3,0) -- (2,0);
    \draw (2,0) -- (2,1.5);
    \draw (2,0) -- (3.06066,1.06066);
    \draw (2,0) -- (3.5,0);
    \draw (2,0) -- (3.06066,-1.06066);
    \draw (2,0) -- (2,-1.5);
    
    \filldraw[teal] (3.05,-1.05) circle (2pt);
    \filldraw[teal] (3.05,1.05) circle (2pt);
    \filldraw[teal] (2,1.5) circle (2pt);
    \filldraw[teal] (3.5,0) circle (2pt);
    \filldraw[teal] (2,-1.5) circle (2pt);
    \filldraw[orange] (2,0) circle (2pt);
    
    \fill[fill=gray!20] (0.3,0) circle (12pt);
   
    \node[] () at (2.3,1) {{\scriptsize $\X_1$}};
    \node[] () at (2.9,0.5) {{\scriptsize $\X_2$}};
    \node[] () at (3.1,-0.25) {{\scriptsize $\X_3$}};
    \node[] () at (2.5,-0.8) {{\scriptsize $\X_4$}};
    \node[] () at (1.7,-1.1) {{\scriptsize $\X_k$}};

    \filldraw[black] (2.35,-1.5) circle (0.7pt);
    \filldraw[black] (2.59,-1.45) circle (0.7pt);
    \filldraw[black] (2.77,-1.32) circle (0.7pt);

    \node[] () at (1.3,0.25) {{\scriptsize $\Y$}};
    \end{tikzpicture}
    \subcaption[]{}
    \label{fig:rosette-in}
    \end{subfigure}
    \hspace{0.2cm}
    \begin{subfigure}{0.45\textwidth}
    \centering
    \begin{tikzpicture}[scale=1]
    \draw (0.5,0) -- (2,0);
    \draw (2,0) -- (4,0);
    
    \filldraw[teal] (4,0) circle (2pt);
    \filldraw[violet] (2,0) circle (2pt);
    
    \fill[fill=gray!20] (0.5,0) circle (12pt);
   
    \node[] () at (1.4,0.25) {{\scriptsize $\Y$}};
    \node[] () at (3,-0.3) {{\scriptsize $\X_1\cup\ldots\cup \X_k$}};

    \node[] () at (2,-1.5) {};
    \end{tikzpicture}
    \subcaption[]{}
    \label{fig:rosette-out}
    \end{subfigure}
    \\
    \vspace{1cm}
    \begin{subfigure}{0.45\textwidth}
    \centering
     \begin{tikzpicture}[scale=1]
    \draw (0.5,0) -- (2,0);
    \draw (2,0) -- (3.5,0);
    
    \filldraw[teal] (3.5,0) circle (2pt);
    \filldraw[violet] (2,0) circle (2pt);
    
    \fill[fill=gray!20] (0.5,0) circle (12pt);
   
    \node[] () at (1.4,0.25) {{\scriptsize $\Y$}};
    \node[] () at (2.75,0.25) {{\scriptsize $\X$}};

    \end{tikzpicture}
    \subcaption[]{}
    \label{fig:bud-in}
    \end{subfigure}
    \hspace{0.2cm}
    \begin{subfigure}{0.45\textwidth}
    \centering
    \begin{tikzpicture}[scale=1]
    \draw (0.5,0) -- (3,0);
    
    \filldraw[teal] (3,0) circle (2pt);
    
    \fill[fill=gray!20] (0.5,0) circle (12pt);
   
    \node[] () at (2,0.25) {{\scriptsize $\Y\setminus\X$}};

    \end{tikzpicture}
    \subcaption[]{}
    \label{fig:bud-out}
    \end{subfigure}
    \caption{The two main transformations of a graph $\gf{G}^{(i)}$ discussed in the main text. Leaves are shown in green, buds in purple and rosettes in orange. The gray blob represents the rest of the graph, which is assumed here not to be a leaf. In one type of transformation, a generic rosette (a) is transformed into a bud (b) by replacing its adjoining $k$ leaves with a single new leaf, with the edge label shown in the figure (see the main text for the transformation of a star-like rosette). In the second type of transformation, a bud (c) is removed and the remaining edge, connecting the leaf (originally connected to the bud) to the rest of the graph, is labeled as shown in (d).}
    \label{fig:min-cut-proof}
\end{figure}
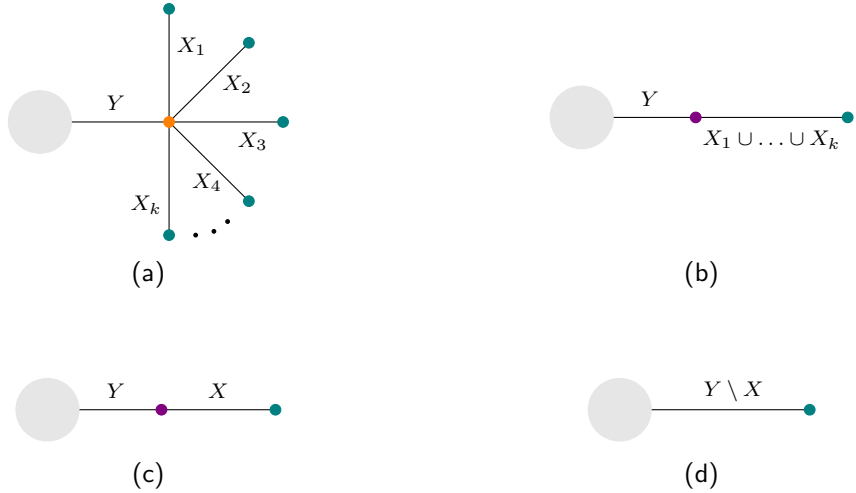

    The second operation removes a bud. For given a graph $\gf{G}^{(i)}$ with corresponding inequality $\mathfrak{I}^{(i)}$, and for a chosen bud, we denote by $\X$ the label of the edge connecting the bud to its leaf, and by $\Y$ the label of the other edge, connecting the bud to the rest of the graph (cf., \Cref{fig:bud-in}). The new graph $\gf{G}^{(i+1)}$ is obtained by replacing the bud and the leaf with a single leaf, with the incident edge labeled by $\Y\setminus\X$ (cf., \Cref{fig:rosette-out}). The new inequality $\mathfrak{I}^{(i+1)}$ is obtained from $\mathfrak{I}^{(i)}$ by adding the following instance of the Araki-Lieb inequality
    \begin{equation}
    \label{eq:bud-ineq}
        \ent_{\Y\setminus\X} \, \leq \, \ent_{Y} + \ent_\X.
    \end{equation}

    Note that in \eqref{eq:rosette-ineq} we have implicitly assumed that for all $i,j\in [k]$ we have $\X_i\cap\X_j=\varnothing$, and similarly, in \eqref{eq:bud-ineq}, that $\Y\supset\X$. However, in general this is not the case, not even for $\gf{G}^{(0)}$.\footnote{\,A simple example is obtained from a vertex-labeled graph $\gf{G}$ which is a star graph, where the central vertex corresponds to the subsystem $\X$ for which we are considering a cut. In this case any possible choice of root is necessarily a leaf, and the label of the edge connecting the root to the center (in the graph $\gf{G}^{(0)}$ obtained from $\gf{G}$) contains all labels of the other edges.} To guarantee that these relations hold at every step of the sequence, we introduce an additional rule governing the choice of rosettes and buds at each step, together with a variant of the main transformations described above for a star-like rosette. Specifically, at each step of the sequence (until the last one), we choose only generic rosettes and buds such that the root is neither the chosen rosette or bud, nor a leaf connected to them (i.e., it is in the gray blob in \Cref{fig:min-cut-proof}). It should be clear that this is always possible until the graph is a star. Furthermore, in the particular case where $\gf{G}^{(i)}$ is a star-like rosette and the root is a leaf (rather than the central vertex), we treat the central vertex as a generic rosette and apply the generic transformation to the non-root leaves, yielding a bud (rather than a single vertex). With these rules, the relations among the labels mentioned above hold for $\gf{G}^{(0)}$ and are preserved at each step of the sequence (we leave the details as a simple exercise for the reader). Note that the complete graph with two vertices $\gf{K}_2$ can be viewed as a star graph, and either of its vertices as a star-like rosette. 

    With these definitions in place, we can now complete the proof. Starting from $\gf{G}^{(0)}$, we iteratively apply the transformations described above until we obtain a star-like rosette with the root in the center,\footnote{\,In the particular case of the graph $\gf{K}_2$ the root in the trivial ``center'' is also a leaf.} and subsequently a single vertex, thereby completing the sequence. It only remains to be shown that, independently of the exact choices of transformations made along the sequence (subject to the constraints mentioned above), the end result is \eqref{eq:ent-ineq}. We show this in three steps: (i) each term on the r.h.s.\ of \eqref{eq:ent-ineq} appears on the r.h.s.\ of a single inequality added along the sequence, and never on the l.h.s.; (ii) all terms appearing in the various inequalities generated along the sequence which do not appear in \eqref{eq:ent-ineq} cancel; (iii) the only term on the l.h.s.\ of the inequality at the end of the sequence matches the l.h.s.\ of \eqref{eq:ent-ineq}. We assume that, for each term on the r.h.s.\ of \eqref{eq:ent-ineq}, the subsystem $\Z(e)$ has been chosen such that it matches the label of the edge $e$ in $\gf{G}^{(0)}$. 

    i) The terms on the r.h.s.\ of \eqref{eq:ent-ineq} correspond to the edge labels of $\gf{G}^{(0)}$. Since each edge is removed, at some step of the sequence, by one of the transformations described above, each corresponding term is added to the final inequality by either \eqref{eq:rosette-ineq} or \eqref{eq:bud-ineq}. Furthermore, since these transformations remove these edges and introduce new ones whose labels are guaranteed to be different from those of $\gf{G}^{(0)}$, each term is added precisely once. Finally, these terms never cancel, because the terms on the l.h.s.\ of \eqref{eq:rosette-ineq} and \eqref{eq:bud-ineq}  correspond only to labels of new edges.

    ii) The transformation that converts a generic rosette into a bud generates a new edge, with a new label, and the corresponding term appears on the l.h.s.\ of \eqref{eq:rosette-ineq}. This transformation however is followed by the removal of the bud, and the inequality \eqref{eq:bud-ineq} includes on the r.h.s.\ the same term. This shows that all terms corresponding to edges generated during the sequence cancel.

    iii) As mentioned above, the sequence ends when the graph $\gf{G}^{(i)}$ is a star graph with the root at the center, and $\gf{G}^{(i+1)}$ is a single vertex. The term $\ent_{\Y}$ on the l.h.s.\ of the last inequality added by the sequence does not cancel, and it is the only term on the l.h.s.\ of the final inequality. Therefore, we only need to show that it matches the l.h.s.\ $\ent_\X$ of \eqref{eq:ent-ineq}, i.e., that $\Y=\X$. We will demonstrate this by showing that every party $\ell\in\X$ belongs to the label of one of the edges of $\gf{G}^{(i)}$, and similarly, that every party $\ell'\notin\X$ does not belong to any of these labels.
    
    First notice that in the vertex-labeled graph $\gf{G}$, and for any pair of adjoining vertices, one vertex corresponds to a component of $C_\X$, and the other to a component of its complement. Therefore, in $\gf{G}^{(0)}$, any vertex such that its distance from the root is odd corresponds to a component of $C_\X$, and any vertex at even distance from the root to a component of the complement. Consider now a party $\ell\in\X$, and let $v$ be the vertex of $\gf{G}$ whose label contains $\ell$. In $\gf{G}^{(0)}$, by construction, the party $\ell$ is contained in the label of every edge in the path from the root to $v$, which has odd length. The transformation of a generic rosette does not change the number of edges whose label contains $\ell$. On the other hand, a first iteration of the transformation of a bud reduces this number by two units (since one edge is deleted and one has a label that no longer contains $\ell$), while a second iteration does not alter it. 
    
    In other to reduce a path of odd length to one of unit length (as required to obtain a leaf of the star $\gf{G}^{(i)}$), this operation needs to be repeated an even number of times. This guarantees that $\ell$ belongs to the label of the edge connecting the root to the leaf of $\gf{G}^{(i)}$ obtained from this reduction. By a similar reasoning, any party $\ell'\notin\X$ is not contained in any such edge, since the transformation of a bud needs to be applied an odd number of times.
\end{proof}

\section{Discussion}
\label{sec:discussion}

We conclude with a few observations about the proof and implications of \Cref{thm:main}, and open questions for future investigations. 

Throughout the proof, and in particular in the formulation of various intermediate lemmas and theorems, we have chosen to always assume that Algorithm~\ref{alg:reconstruction} is applied to irreducible entropy vectors. This choice is motivated, on the one hand, by the clarity it affords to the presentation, and on the other hand by the fact that the main goal of this work is the proof of \Cref{thm:main-reduced}, from which \Cref{thm:main} follows straightforwardly. Nevertheless, it is not clear that all these assumptions are strictly necessary for the proof of each intermediate result presented in \S\ref{sec:proof}, and some may in fact hold more generally (cf.\ the example in \Cref{fig:SA-violation-example}). At present, however, it remains unclear what the application of such generalizations would be, and we therefore leave a systematic investigation of this question to future work.

There is, however, one aspect of our proof related to these possible generalizations that is worth noting. While we have formulated \Cref{thm:main} and \Cref{thm:main-reduced} for entropy vectors that \textit{by assumption} satisfy strong subadditivity, this assumption is not strictly necessary for the proof; indeed, we could have relaxed it to require only that the PMI of $\Svec$ is a KC-PMI (since this is necessary and sufficient for the existence of the correlation hypergraph representation). However, since the entropy vector realized by any holographic graph model \textit{does} obey strong subadditivity, we obtain the following result.

\begin{thm}
    Any entropy vector in the subadditivity cone whose \emph{PMI} is a chordal \emph{KC-PMI} satisfies strong subadditivity.
\end{thm}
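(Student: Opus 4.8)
The plan is to observe that the constructive proof of \Cref{thm:main} carried out in \S\ref{sec:proof} never uses the full strength of SSA, and then to combine this with the fact that every holographic graph model satisfies SSA. Let $\Svec$ lie in the subadditivity cone with a chordal KC-PMI. Since the PMI is a KC-PMI, the correlation hypergraph and its line graph $\lhp$ are defined, and $\lhp$ is chordal by hypothesis, so Algorithm~\ref{alg:reconstruction}---together with the reductions of \cite{graph-construction} in the reducible case---applies and produces a candidate simple forest graph model $\gf{G}$.

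The core of the argument is an audit of \S\ref{sec:proof} to isolate which hypotheses are genuinely needed. The structural facts imported from \cite{Hubeny:2024fjn} (every essential $\bset$ is positive, $|Q^\cap|\le 2$, $Q^\ell$ lies in a unique max-clique, and $\Gamma(\X)$ is a partition) are all stated for arbitrary KC-PMIs, hence require only Klein's condition. The combinatorial results \Cref{lem:fund-cut-properties} through \Cref{lem:cost-decompos} and \Cref{thm:C-equals-S} use only this KC-PMI structure together with the purity relation $\ent_\Z=\ent_{\Z^\complement}$ built into the extended entropy vector; in particular, the single place where a vanishing mutual information is inferred from inclusions---in the proof of \Cref{lem:disconneted-cuts}---invokes precisely Klein's condition and not SSA. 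The only genuine entropy inequality entering the proof is SA, used in the final min-cut theorem where \eqref{eq:ent-ineq} is written as a positive combination of SA instances; that SA is indispensable is exactly the content of \Cref{fig:SA-violation-example}. Thus \Cref{thm:alg-suff} holds with the hypothesis ``SA and SSA'' weakened to ``SA and KC-PMI''.

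Given this, the full construction underlying \Cref{thm:main} produces a bona fide simple forest graph model $\gf{G}$ whose min-cut entropy vector equals $\Svec$, and since every holographic graph model satisfies SSA, $\Svec$ satisfies SSA as well. The main obstacle I anticipate is the audit itself: one must check line by line that no step silently strengthens Klein's condition to SSA, and in particular verify that the reductions of \cite{graph-construction}, being external to this paper, depend only on the PMI and therefore transmit the chordal KC-PMI property---rather than SSA---to the smaller systems, so that the induction on $\N$ closes.
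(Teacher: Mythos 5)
Your proposal is correct and is essentially identical to the paper's own argument: the paper proves this theorem precisely by observing that the construction in \S\ref{sec:proof} only ever uses SA together with Klein's condition (not full SSA), so the algorithm still yields a graph model realization, and then invoking the fact that any holographic graph model's entropy vector obeys SSA. Your line-by-line audit (including the footnoted remark that \eqref{eq:entvec-cost-non-pos-bsets} needs only a KC-PMI, and that SA enters only in the final min-cut theorem) just makes explicit the checks the paper leaves implicit.
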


Another interesting aspect of the proof of \Cref{thm:main-reduced} is that the specific choice of the clique tree $\widetilde{\gf{T}}$ in Algorithm~\ref{alg:reconstruction} seems to be immaterial. Accordingly, for clarity, we have chosen to keep the dependence on $\widetilde{\gf{T}}$ of various intermediate results implicit throughout the presentation. We believe that the reason why Algorithm~\ref{alg:reconstruction} always constructs a simple tree graph model that reproduces $\Svec$ independently from the choice of $\widetilde{\gf{T}}$ is that, under our conditions, $\widetilde{\gf{T}}$ is in fact unique. However, since the uniqueness of $\widetilde{\gf{T}}$ is not essential for our present purposes, we leave the clarification of this aspect of the algorithm for future work.

Finally, let us comment on the implication of \Cref{thm:main} for the structure of the holographic entropy cone. If the strong form of the conjecture in \cite{Hernandez-Cuenca:2022pst} holds, and therefore every extreme ray of the HEC, for any $\N$, can be realized by a (not necessarily simple) tree graph model, then the result of this work reveals something rather striking. It shows that the “essence’’ of the HEC—its foundational data, the minimal ingredients from which the full structure can be derived, and the guiding principle underlying its geometry—is nothing other than the set of chordal extreme rays of the subadditivity cone. In this sense, the vast and intricate landscape of holographic entropies would ultimately trace its origin back to this remarkably simple and elegant core.

\acknowledgments

V.H. has been supported in part by the U.S. Department of Energy grant DE-SC0009999 and by funds from the University of California.   M.R. acknowledges support from UK Research and Innovation (UKRI) under the UK government’s Horizon Europe guarantee (EP/Y00468X/1).
V.H. acknowledges the hospitality of the Kavli Institute for Theoretical Physics (KITP) during early stages of this work and of the Aspen Center for Physics (supported by National Science Foundation grant PHY-2210452). The authors acknowledge the hospitality of the Centro de Ciencias de Benasque Pedro Pascual during the workshop ``Gravity - New quantum and string perspectives''.
M.R. would like to thank the hospitality of QMAP and the University of California, Davis, during various stages of this work.

There is no underlying data associated with this work.

For the purpose of open access, the authors have applied a Creative Commons Attribution (CC BY) licence to any Author Accepted Manuscript version arising from this submission.

\bibliography{chordality_sufficient}
\bibliographystyle{utphys}

\end{document}